\newif\ifDraft
\newif\ifArxiv
\newif\ifShowFigures
\def\makeheadbox{{%
\hbox to0pt{\vbox{\baselineskip=10dd\hrule\hbox
to\hsize{\vrule\kern3pt\vbox{\kern3pt
\hbox{\bfseries Arxiv Technical Report\ }
\kern3pt}\hfil\kern3pt\vrule}\hrule}%
\hss}}}
\journalname{Algorithmica}
\newcounter{mycommentcounter}
\newcommand{\Comment}[2][Comment]{\refstepcounter{mycommentcounter}%
    \ifhmode%
     \unskip%
     {\dimen1=\baselineskip \divide\dimen1 by 2 %
       \raise\dimen1\llap{\tiny
	{-\themycommentcounter-}}}\fi%
     \marginpar[{\renewcommand{\baselinestretch}{0.8}%
       \hspace*{-1em}\begin{minipage}{6.6em}\footnotesize%
[\themycommentcounter]:%
\raggedright \underline{#1}: #2\end{minipage}}]{\renewcommand{\baselinestretch}{0.8}%
       \begin{minipage}{12em}\footnotesize%
[\themycommentcounter]: \raggedright%
\underline{#1}: #2\end{minipage}}%
}
\newcommand{\Comment}[2][Comment]{\relax}        
\newcommand{\CR}{\ensuremath{{\cal R}}}
\newcommand{\highlight}[1]{{\bfseries\itshape #1}}
\newcommand{\define}[1]{{\em #1}}
\begin{document}
\title{Optimal Polygonal Representation of Planar Graphs\thanks{A preliminary version of this paper appeared in LATIN 2010, Oaxaca, Mexico}
}

\author{
C.~A.~Duncan \and E.~R.~Gansner \and Y.~F.~Hu  \and M.~Kaufmann \and S.~G.~Kobourov
}

\institute{
    C.~A.~Duncan \at
    Dept. of Computer Science, Louisiana Tech University\\
    \email{duncan@latech.edu}
  \and
    E.~R.~Gansner \and Y.~F.~Hu \at
    AT\&T Research Labs, Florham Park, NJ\\
    \email{\{erg, yifanhu\}@research.att.com}
  \and
    M.~Kaufmann \at
    Wilhelm-Schickhard-Institut for Computer Science, T\"ubingen University\\
    \email{mk@informatik.uni-tuebingen.de}
  \and
    S.~G.~Kobourov \at
    Dept. of Computer Science, University of Arizona\\
    Tel.: +123-45-678910\\
    Fax: +123-45-678910\\
    \email{kobourov@cs.arizona.edu}
}

\ifArxiv
\date{Received: April, 2011}
\else
\date{Received: date / Accepted: date}
\fi

\maketitle

\begin{abstract}
In this paper, we consider the problem of representing graphs by polygons whose sides touch.
We show that at least six sides per polygon are necessary by constructing a class of planar graphs that cannot be represented by pentagons.
We also show that the lower bound of six sides is matched by an upper bound of six sides with a linear-time algorithm for representing any planar graph by touching hexagons.
Moreover, our algorithm produces convex polygons with edges having at most three slopes 
and with all vertices lying on an $O(n) \times O(n)$ grid.
\keywords{Planar graphs \and Contact graphs 
\and Graph drawing \and Polygonal drawings
}
\end{abstract}

\section{Introduction}
For both theoretical and practical reasons, there is a large body of work considering how to
represent planar graphs as {\em contact graphs}, i.e., graphs whose vertices are represented by
geometrical objects with edges corresponding to two objects touching in some specified fashion.
Typical classes of objects might be curves, line segments or isothetic rectangles, and 
an early result is Koebe's theorem~\cite{Koebe36}, which shows that all planar graphs can be
represented by touching disks. 

In this paper, we consider contact graphs whose objects are simple polygons, with an edge occurring
whenever two polygons have non-trivially overlapping sides. As with treemaps~\cite{Bruls+Huizing+Wijk+2000a}, such representations
are preferred in some contexts~\cite{Buchsbaum08} over the standard node-link representations for displaying relational information.
Using adjacency to represent a connection can be much more compelling, and cleaner, than drawing a line segment between two
nodes. For ordinary users, this representation suggests the familiar metaphor of a geographical map.
 
It is clear that any graph represented this way must be planar.
As noted by de Fraysseix {\em et al.}~\cite{FMR04},
it is also easy to see that all planar graphs have such representations for sufficiently
general polygons.
Starting with a straight-line planar drawing of a graph, 
we can create a polygon for each vertex by taking the midpoints of all adjacent edges and the centers of all neighboring faces. 
Note that the number of sides in each such polygon is proportional to the degree of its vertex. 
Moreover, these polygons are not necessarily convex; see Figure~\ref{fig-map}.

It is desirable, for aesthetic, practical and cognitive reasons, to limit the complexity of the polygons involved, where ``complexity'' here means the number of sides in the polygon.
Fewer sides, as well as wider angles in the polygons, make for simpler and cleaner drawings. In related applications such as 
floor-planning~\cite{Liao03}, physical constraints make polygons with very small angles or many sides undesirable.
One is then led to consider how simple such representations can be. 
How many sides do we really need?
Can we insist that the polygons be convex, perhaps with a lower bound on the size of the angles or the edges?
If limiting some of these parameters prevents the drawing of all planar graphs, which ones can be drawn?

\begin{figure}
  \begin{center}
\ifShowFigures
\begin{tabular}{ccc}
    \includegraphics[scale=.5]{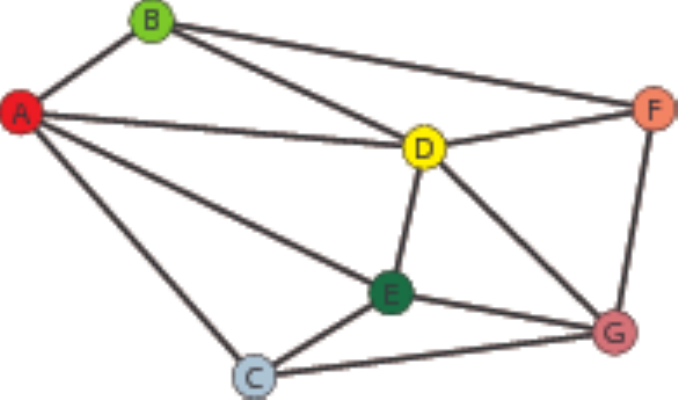} &
    \includegraphics[scale=.5]{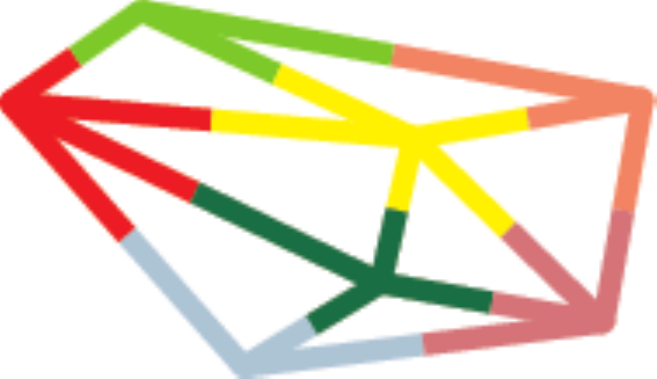} &
    \includegraphics[scale=.5]{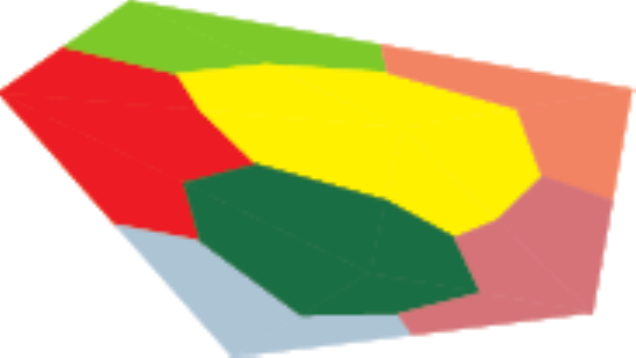}\\
    (a) & (b) & (c)
\end{tabular}
\else
    Figure omitted because causing problems with my printer!
\fi
  \end{center}
  \caption{\small\sf\label{f:thickedges} (a) A drawing of a planar graph.
(b) We apportion the edges to the endpoints by cutting each 
edge in half.
(c) We then apportion the faces to form polygons.}
\label{fig-map}
\end{figure}

\vspace{-0.2in}
\subsection{Our Contribution}
This paper provides answers to some of these questions. 
Previously, it was known~\cite{He99,Liao03} that triangulated planar graphs can be represented using non-convex octagons.
On the other hand, it is not hard to see that one cannot use 
triangles (e.g., $K_5$ minus one edge cannot be represented with triangles~\cite{ttg}). 

Our main result is showing that hexagons are necessary and sufficient
for representing all planar graphs.
For necessity we construct a class of graphs that cannot be represented using five or fewer sides. 
For sufficiency, we prove the following:
\begin{theorem}
\label{thm:t6g}
For any planar graph $G$ on $n$ vertices, we can construct 
in linear time on an $O(n) \times O(n)$ grid
a touching hexagons representation of $G$ with convex regions.
Moreover, if the graph is a triangulation, the representation is
also a tiling.
\end{theorem}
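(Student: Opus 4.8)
The plan is to reduce to the case of triangulations and then build the representation incrementally along a canonical ordering, laying down one hexagon at a time like bricks in a wall. First I would augment the input graph $G$ to a maximal planar graph (triangulation) $T$ in linear time, remembering which edges were added; connectivity issues (disconnected or non-biconnected $G$) are handled during this augmentation. If I can produce a touching-hexagon \emph{tiling} for $T$, then for every edge that was added during augmentation I slightly retract one of the two incident polygons away from the shared boundary segment, destroying exactly the unwanted contacts while preserving convexity and the six-sided, three-slope shape; rescaling the grid by a constant factor absorbs these perturbations and keeps all coordinates integral on an $O(n)\times O(n)$ grid. Hence it suffices to treat triangulations, for which the non-retracted construction is a genuine tiling, giving the final sentence of the theorem.

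For a triangulation $T$ I would compute a canonical ordering $v_1,\dots,v_n$ in linear time and process the vertices in this order, maintaining as an invariant a \emph{contour}: the upper boundary of the union of the hexagons placed so far, represented as an $x$-monotone polyline all of whose segments use only the three permitted slopes (say horizontal together with one positive and one negative diagonal). The key structural fact supplied by the canonical ordering is that when $v_k$ is inserted, its already-placed neighbors occupy a contiguous interval of the contour. I would then create the hexagon $H_k$ so that its lower boundary coincides with this contour interval, which guarantees that $H_k$ shares a nontrivial segment with each lower neighbor even though several such neighbors may lie under a single long side of $H_k$; its at most three upper sides are chosen from the remaining slopes so as to expose fresh contour segments for the neighbors of $v_k$ that are yet to be placed. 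Because one side of a hexagon may be subdivided among many neighbors below it, the six-side bound is compatible with vertices of arbitrarily high degree.

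The quantitative control comes from a shifting argument in the spirit of the de Fraysseix--Pach--Pollack grid-drawing method: before placing $H_k$ I rigidly translate the already-placed hexagons lying to the right of the insertion interval so that the interval is wide enough for every required contact to be nontrivial, and I verify that the total horizontal displacement accumulated over all $n$ steps is $O(n)$, keeping the drawing on a linear grid and the running time linear.

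The step I expect to be the main obstacle is the simultaneous realization of the local hexagon shape: I must design $H_k$ so that it is convex, has at most six sides drawn with only the three global slopes, makes a nontrivial contact with each of its possibly many lower neighbors, and leaves the correct number of correctly-sloped contour edges for its future (upper, and left/right outer) neighbors, all while keeping the contour a monotone three-slope polyline for every subsequent step. Pinning down a single parametric hexagon template that meets all of these constraints at once, and proving that the integer shifts it forces stay within an $O(n)$ budget while the resulting tiling has no gaps or overlaps, is the delicate heart of the argument; the reduction and the asymptotic bookkeeping are comparatively routine once the template is fixed.
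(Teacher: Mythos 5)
Your reduction to the triangulated case is exactly the approach the paper considers and rejects, and it contains a genuine gap. You triangulate by adding \emph{edges} and then, at the end, ``slightly retract'' one polygon of each dummy adjacency away from the shared segment. But in any construction of this type (including your own, as you note that ``several such neighbors may lie under a single long side''), a single side of a hexagon typically carries contacts with \emph{many} neighbors. If the dummy contact occupies only a sub-segment of that side, retracting the whole side destroys genuine adjacencies, while retracting only the sub-segment introduces a notch, which breaks convexity and pushes the polygon above six sides. There is no local perturbation that removes exactly one side-contact of a convex three-slope hexagon while preserving all others, and this is precisely why the paper instead triangulates by inserting a dummy \emph{vertex} inside each face (after biconnecting at articulation points), so that at the end one simply deletes the dummy regions, leaving convex holes --- no perturbation of the remaining tiling is ever needed. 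This also explains the theorem's phrasing: the result is a tiling only when $G$ is already a triangulation.

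Beyond the reduction, the heart of your argument is missing, and you say so yourself: you never exhibit the hexagon template, nor prove that FPP-style rigid shifts preserve all previously established side-contacts (in de Fraysseix--Pach--Pollack the analogous preservation lemma is for planarity of straight-line edges, and it does not transfer to nontrivial segment overlaps for free), nor prove the $O(n)$ shift budget. The paper sidesteps these difficulties by decoupling combinatorics from geometry: it first \emph{carves} each new region as an isosceles trapezoid at distance $2^{-i}$ below the active front of an initial triangle --- a case analysis over ten region shapes shows every region stays convex with at most six sides, at the price of exponential area --- and then runs a separate compaction pass that reinterprets the non-horizontal region boundaries as a ``capped binary tree'' and redraws that tree on an $(n-1)\times(n-1)/2$ grid with slopes $\pm 1$ and horizontal caps. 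Contact preservation under compaction is then automatic (only the boundary tree is re-embedded, with an inductive width bound on cap-node pairs), which is exactly the preservation statement your shifting argument would need but does not supply. So the proposal is not a proof: its reduction step fails as stated, and its constructive core is an acknowledged placeholder.
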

Note, if the input graph is not triangulated, there might be convex holes.
We, in fact, prove this theorem using two different methods.
First, in Sections~\ref{t6g} and~\ref{sec_compact}, we describe a linear-time 
algorithm that
produces a representation using convex hexagons 
along with a linear-time compaction algorithm
to reduce the initial exponential area to an  $O(n) \times O(n)$ integer grid.
Second, in Section~\ref{sec_alternate},
we show how modifying Kant's algorithm for hexagonal grid drawings of 
3-connected, 3-regular planar
graphs~\cite{kant-92} produces a similar result by different means.
In both variations, the drawings use at most three slopes for the sides, 
for example, 1, 0 and -1.

\vspace{-0.2in}
\subsection{Related Work}
As remarked above, there is a rich literature related to various types of contact graphs. 
There are many results considering curves and line segments as objects (cf.~\cite{ccg:h98,contact:hk01}).
For closed shapes such as polygons, results are rarer, except for axis-aligned (or {\em isothetic}) rectangles. 
In a sense, results on representing planar graphs as ``contact systems'' can be dated back to 
Koebe's 1936 theorem~\cite{Koebe36} which states that any planar graph can be represented as a contact 
graph of disks in the plane.

The focus of this paper is side-to-side contact of polygons.
The algorithms of He~\cite{He99} and Liao {\em et al.}~\cite{Liao03} 
produce contact graphs of this type
for triangulated graphs, with nodes represented by the
union of at most two isothetic rectangles, thus giving a polygonal representation by non-convex octagons.

We now turn to contact graphs using isothetic rectangles, which are often referred 
to as {\em rectangular layouts}.
This is the most extensively studied class of contact graphs, due in part
to its relation to application areas such as VLSI floor-planning~\cite{fp:ll88,slice:ys95}, architectural design~\cite{arch:s76} and
geographic information systems~\cite{rdgis:gs69}, but also due to
the mathematical ramifications and connections to
other areas such as
rectangle-of-influence drawings~\cite{rid:llmw98} and proximity drawings~\cite{prox:dll94,rng:jt92}.

Graphs allowing rectangular layouts have been fully 
characterized~\cite{Rahman04,rl:t86} with linear algorithms for deciding
if a rectangular layout is possible and, if so, constructing one. The simplest formulation~\cite{Buchsbaum08} notes
that a graph has a rectangular layout if and only if it has a planar embedding with no filled triangles.
Thus, $K_4$ has no rectangular layout. Buchsbaum {\em et al.}~\cite{Buchsbaum08} also show, using results of
Biedl {\em et al.}~\cite{rid:bbm99}, that graphs that admit rectangular layouts
are precisely those that admit a weaker variation of planar rectangle-of-influence drawings.

Rectangular layouts required to form a partition of a rectangle are known as {\em rectangular duals}.
In a sense, these are ``maximal'' rectangular layouts; many of the results concerning rectangular layouts
are built on results concerning rectangular duals.
Graphs admitting rectangular duals have been characterized~\cite{He93,rdual:kk88,rd:ll90} and there are linear-time 
algorithms~\cite{He93,rel:kh97} for constructing them.

%
%

Another view of rectangular layouts arises in VLSI floor-planning, where a rectangle is partitioned 
into rectilinear regions so that region adjacencies correspond to a given planar graph. 
It is natural to try to minimize the complexities of the resulting regions. The best known results are 
due to He~\cite{He99} and Liao {\em et al.}~\cite{Liao03} who show that regions need not have more than 8 sides. 
Both of these algorithms run in $O(n)$ time and produce layouts on an $O(n) \times O(n)$ integer grid
where $n$ is the number of vertices.

Rectilinear cartograms can be defined as rectilinear contact graphs for vertex-weighted planar graphs, where the 
area of a rectilinear region must be proportional to the weight of its corresponding node. Even with this extra 
condition, de Berg {\em et al.}~\cite{deBerg07} show that rectilinear cartograms can
always be constructed in $O(n \log n)$ time, using regions having at most 40 sides. The resulting regions, however, are highly non-convex and can have poor aspect ratio.
Recently, Alam {\em et al.}~\cite{abfkk-pcrpg-11} explore
optimal bounds on the complexity of the polygons needed to produce point-contact and side-contact
representations of subclasses of vertex-weighted 
planar graphs with additional restrictions such as convexity and hole-free regions.

Although not considered by the authors, an upper bound of six for the minimum 
number of sides in a touching polygon representation of planar graphs might 
be obtained from the vertex-to-side triangle contact graphs of 
de Fraysseix {\em et al.}~\cite{FMR04}. 
The top edge of each triangle can be converted into a raised 3-segment polyline,
clipping the tips of the triangles touching it from above, thereby turning the triangles into side-touching hexagons. 
This approach might prove difficult
for generating hexagonal representations as it involves 
computing the amounts by which each triangle may be raised so as to become 
a hexagon without changing any of the adjacencies. 
Moreover, the nature of 
such an algorithm would produce many ``holes,'' potentially making such 
drawings less appealing, or requiring further modifications.
In~\cite{glp-tcrd-10}, Gon\c{c}alves {\em et al.} describe a similar approach after
presenting an algorithm to create primal-dual triangle 
contact representations, where each node and face are represented as triangles.

\subsection{Preliminaries}

\highlight{Touching Hexagons Graph Representation:} Throughout this paper, we assume we are dealing with a connected planar graph $G=(V,E)$. We would like to construct a set of closed simple polygons $R$ whose interiors are pairwise disjoint, along with an isomorphism $\CR : V \rightarrow R$, such that for any two vertices $u,v \in V$, the boundaries of $\CR(u)$ and $\CR(v)$ overlap non-trivially if and only if $\{u,v\}\in E$. For simplicity, we adopt a convention of the cartogram community and define the \define{complexity}
of a polygonal region as the number of sides it has. We call the set of all graphs having such a representation where each polygon in $R$ has complexity 6 \define{touching hexagons graphs}.

\smallskip\noindent\highlight{Canonical Labeling:} 
Our algorithms begin by first computing a \define{planar embedding} of the input graph $G=(V,E)$ and using 
that to obtain a \define{canonical labeling} of the vertices.
A planar embedding of a graph is simply a clockwise order of the neighbors of each vertex in the graph. 
Obtaining a planar embedding can be done in linear time using the algorithm by Hopcroft and Tarjan~\cite{ht-ept-74}. 
The canonical labeling or order of the vertices of a planar graph was defined by de Fraysseix {\em et al.}~\cite{fpp-hdpgg-90} in the context of straight-line drawings of planar graphs on an integer grid of size $O(n) \times O(n)$.
While the first algorithm for computing canonical orders required $O(n \log n)$ time~\cite{fpp-sssfe-88}, Chrobak and Payne~\cite{cp-ltadp-95} have shown that this can be done in $O(n)$ time.

In this section we review the canonical labeling of a planar graph as defined by de Fraysseix {\em et al.}~\cite{fpp-sssfe-88}. Let $G=(V,E)$ be a fully triangulated planar graph embedded in the plane with exterior face $u,v,w$. A canonical labeling of the vertices $v_0=u, v_1=v, v_2, \dots, v_{n-1}=w$ is one that meets the following criteria for every $2 < i < n$:
\begin{enumerate}
\item The subgraph $G_{i-1}\subseteq G$ induced by $v_0, v_1, \dots, v_{i-1}$ is 2-connected, and the boundary of its outer face is a cycle $C_{i-1}$ containing the edge $(u,v)$;
\item The vertex $v_i$ is in the exterior face of $G_{i-1}$, and its neighbors in $G_{i-1}$ form an (at least 2-element) subinterval of the path $C_{i-1}-(u,v)$.
\end{enumerate}

The canonical labeling of a planar graph $G$ allows for the incremental placement of the vertices of $G$ on a grid of size $O(n) \times O(n)$ so that when the edges are drawn as straight-line segments there are no crossings in the drawing. The two criteria that define a canonical labeling are crucial for the region creation step of our algorithm.

Kant generalized the definition for triconnected graphs,
partitioning the vertices into sets $V_1$ to $V_K$
that can be either singleton vertices or chains of vertices~\cite{k-dpguc-96}.

\section{Lower Bound of Six Sides}
\label{sec:opt}

In this section 
we show that at least six sides per polygon are sometimes needed in 
a touching polygons representation of a planar graph. 
We begin by constructing a class of planar graphs that cannot be represented by four-sided polygons and then extend the argument to show that the class also cannot be represented by five-sided polygons. 

\vspace{-0.2in}
\subsection{Four Sides Are Not Enough}
Consider the fully triangulated graph $G^k$ in Figure~\ref{fig-michael}(a).
It has three nodes on the outer face $A, B$ and $C$, and contains a chain of nodes $1,\dots,k$ which are all adjacent to $A$ and $B$.
Consecutive nodes in the chain, $i$ and $i+1$, are also adjacent. The remaining nodes of $G^k$ are degree-3 nodes $l_i$ and $r_i$ inside the triangles $\Delta(A,i,i+1)$ and $\Delta(B,i,i+1)$.

\begin{figure}[t]
\begin{center}
\includegraphics[width=.7\textwidth]{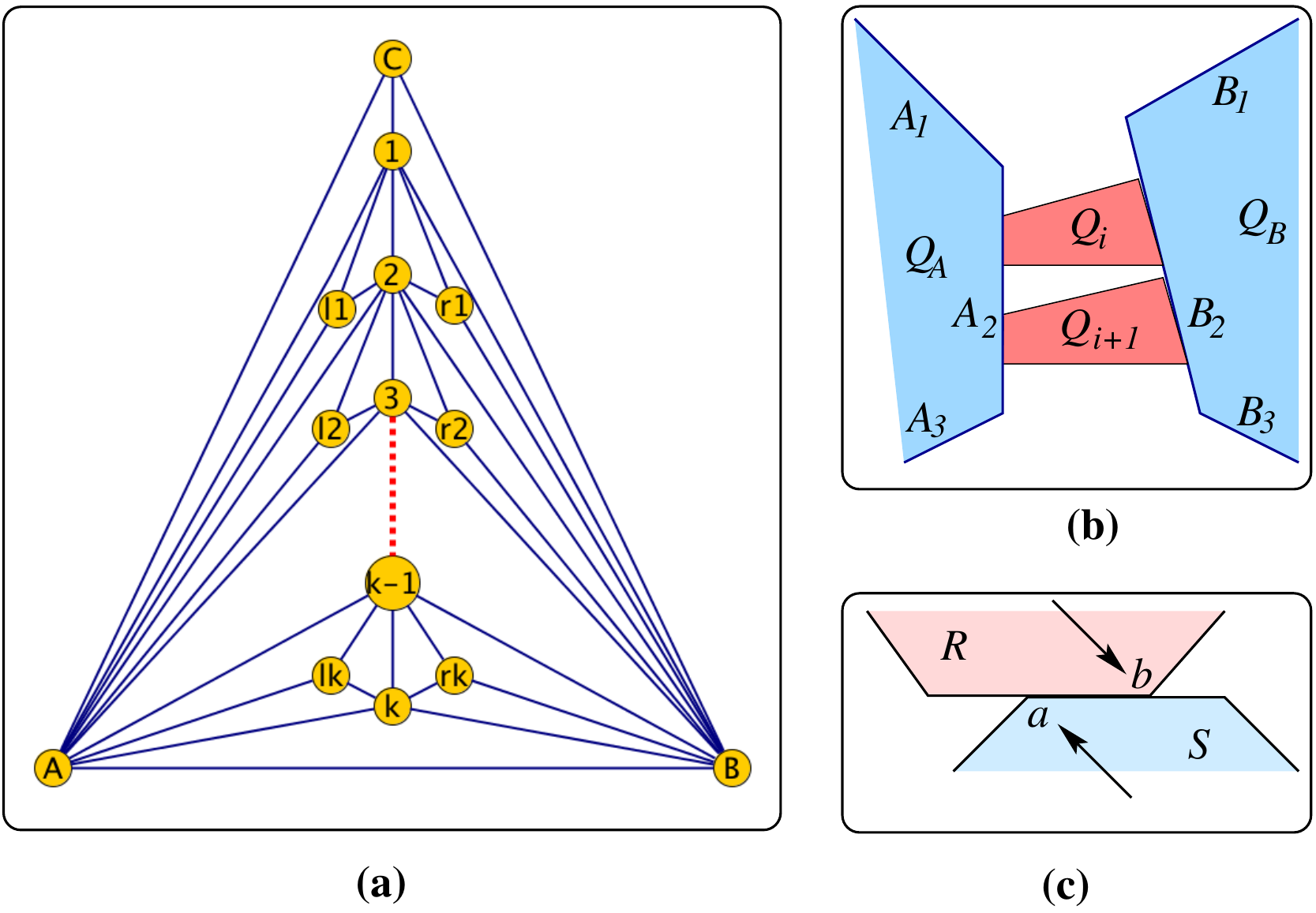}
\caption{\small\sf (a)  The graph $G^k$ that provides the counterexample. (b) A pair of subsequent fair quadrilaterals adjacent to the same sides of $Q_A$ and $Q_B$. (c) Illustration for Observation~2 shows one of  three possible cases for two touching regions.
\label{fig-michael}
}
\end{center}
\end{figure}



\begin{theorem}
\label{thm:moreThan4}
For $k \geq 33$, there does not exist a touching polygons representation for $G^k$ in which all regions have complexity four or less.
\end{theorem}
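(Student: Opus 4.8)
The plan is to argue by contradiction: assume $G^k$ has a touching representation in which every region has complexity at most four, write $Q_v$ for the quadrilateral representing vertex $v$, and derive an impossible local configuration once $k$ is large. The whole difficulty is concentrated on the chain $1,\ldots,k$, every one of whose vertices is adjacent to both $A$ and $B$, so each $Q_i$ must touch both $Q_A$ and $Q_B$.

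First I would use a pigeonhole argument to expose a clean local gadget. Both $Q_A$ and $Q_B$ have at most four sides, and by planarity the chain regions $Q_1,\ldots,Q_k$ meet $\partial Q_A$ (and $\partial Q_B$) in the cyclic order of the chain. Call $Q_i$ \emph{fair} if its contact with $Q_A$ lies in the relative interior of a single side of $Q_A$ and likewise for $Q_B$; at most a constant number of chain regions can straddle a corner of $Q_A$ or $Q_B$, so all but a bounded number are fair. Moreover, the index of the side of $Q_A$ touched by $Q_i$ is monotone in $i$ (and similarly for $Q_B$), so the chain breaks into only a bounded number of maximal runs on which the touched sides of $Q_A$ and of $Q_B$ are both constant. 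Choosing $k \ge 33$ forces one such run to contain two consecutive fair regions $Q_i, Q_{i+1}$ that touch the same side $s_A$ of $Q_A$ and the same side $s_B$ of $Q_B$; this is exactly the pair depicted in Figure~\ref{fig-michael}(b).

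Next I would pin down the geometry of this pair. Let $\ell_A$ and $\ell_B$ be the supporting lines of $s_A$ and $s_B$. Each of $Q_i$ and $Q_{i+1}$ has one side on $\ell_A$ and one side on $\ell_B$ (assume for now $\ell_A \neq \ell_B$; the collinear case is degenerate and handled separately), which accounts for two of its four sides. Because $Q_i$ must also touch its chain-neighbour $Q_{i-1}$ on one side and $Q_{i+1}$ on the other, the two \emph{remaining} sides must be a single straight left side and a single straight right side: if both leftover sides faced the same neighbour, there would be no side left to realize the nontrivial contact with the other neighbour. The degree-$3$ vertices $l_i$ and $r_i$, being adjacent only to $A,i,i{+}1$ and $B,i,i{+}1$ respectively, must have their regions squeezed into the gap between $Q_i$ and $Q_{i+1}$: $Q_{l_i}$ against $\ell_A$ at the top and $Q_{r_i}$ against $\ell_B$ at the bottom, with the direct $Q_i$--$Q_{i+1}$ contact in between.

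The contradiction, which is the heart of the argument, is then immediate. The nontrivial contact between $Q_i$ and $Q_{i+1}$ is a segment shared by the (single, straight) right side of $Q_i$ and left side of $Q_{i+1}$, so these two sides lie on a common line $m$. Their upper endpoints are the top-right corner of $Q_i$ and the top-left corner of $Q_{i+1}$, both of which lie on $\ell_A$; hence both equal the single point $m \cap \ell_A$, and the two regions abut along all of $m$ from $\ell_A$ down to $\ell_B$. This leaves no positive-area room for $Q_{l_i}$ at the top (nor for $Q_{r_i}$ at the bottom), contradicting the fact that $l_i$ is adjacent to both $i$ and $i+1$. I expect the main obstacles to be bookkeeping rather than conceptual: making the monotonicity-plus-corner-counting sharp enough to extract the constant $33$, ruling out the degenerate case $\ell_A = \ell_B$, and checking the handful of ways the two lateral sides can overlap along $m$ (the cases of Figure~\ref{fig-michael}(c)). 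In every such case the forced collinearity of the lateral sides, together with their endpoints landing on $\ell_A$ and $\ell_B$, eliminates the room needed for $l_i$ and $r_i$.
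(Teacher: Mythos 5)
Your proposal is correct, and its first half is exactly the paper's argument: your ``fair'' regions, the monotone progression of contacts along the boundaries of $Q_A$ and $Q_B$, and the pigeonhole over the $4\times 4$ side-pairs are precisely the paper's Observation~1 and Lemma~\ref{lemma:pairOfFair} (at most $2$ chain regions per corner times $8$ corners gives $16$ non-fair regions, $16$ equivalence classes, so $17$ fair regions force a consecutive same-sided pair, whence $16+17=33$). Where you genuinely diverge is the endgame. The paper finishes combinatorially: by its Observation~2 the two endpoints of the $Q_i$--$Q_{i+1}$ contact are corners of $Q_i$ or $Q_{i+1}$; all eight corners of a fair pair lie on $A_p$ or $B_q$; an endpoint on $A_p$ leaves no room for $Q_{l_i}$ and an endpoint on $B_q$ none for $Q_{r_i}$ --- a contradiction reached without ever locating the contact. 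You instead pin the geometry down: the contact-carrying sides are collinear on a line $m$, so their upper endpoints both equal the single point $m\cap\ell_A$ and their lower endpoints the single point $m\cap\ell_B$, i.e.\ the pair is glued along the entire strip between $\ell_A$ and $\ell_B$. This is a stronger conclusion than the paper's, it kills $l_i$ and $r_i$ simultaneously, and it needs only one of the two filler vertices. The price is some degenerate-case care that the paper's corner-accounting absorbs automatically: your coincidence step silently assumes that the side of $Q_i$ carrying the contact with $Q_{i+1}$ shares a corner with its side on $\ell_A$ (cyclic order top--lateral--bottom--lateral); the alternative order, in which the sides on $\ell_A$ and $\ell_B$ share a corner at $\ell_A\cap\ell_B$, and your flagged case $\ell_A=\ell_B$ must both be excluded. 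Both exclusions do follow from fairness by the same Observation-2-style reasoning (fairness forces the contacts with $Q_A$ and $Q_B$ to be \emph{full} sides of $Q_i$ lying in the relative interiors of $A_p$ and $B_q$, and in either degenerate configuration this would make the interiors of $Q_A$ and $Q_B$ overlap), so with those checks written out, and with the paper's exact $16+17$ accounting to hit the constant $33$ rather than a weaker constant, your route is sound.
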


\begin{proof}
Assume, for the sake of contradiction, that we are given a touching polygons drawing for $G^k$ in which all regions have complexity four or less.
Without loss of generality, we assume that the drawing has an embedding that corresponds to the one shown in Figure~\ref{fig-michael}(a).
Let $Q_A$, $Q_B$ and $Q_C$ 
denote the quadrilaterals representing nodes $A$, $B$ and $C$, and 
let $Q_i$ denote the quadrilateral representing node $i$.
Once again, without loss of generality, let $Q_A$ lie in the left corner, $Q_B$ in the right corner and $Q_C$ at the top of the drawing.

We start with an observation.

{\bf Observation 1:} 
Any corner of a quadrilateral can be adjacent to at most two disjoint
quadrilaterals that (non-trivially) touch one of its sides.
Since there are $c=8$ corners of $Q_A$ and $Q_B$, 
we have at most 16 quadrilaterals of the chain 
$Q_1, \dots, Q_k$ that are adjacent to corners of $Q_A$ and/or $Q_B$.

We now consider the quadrilaterals that are {\em not} adjacent to any
of these corners.

Let $Q_i$ be a quadrilateral that is not adjacent to any of the corners of the polygonal chains $A_1,A_2,A_3,A_4$ and $B_1,B_2,B_3,B_4$.
Two of its corners are adjacent to the same side $A_p$ of $Q_A$ 
and the 
other two are adjacent to the same side 
$B_q$ of $Q_B$, $1 \leq p,q \leq 4$. 
We call such a quadrilateral a \define{fair quadrilateral}.

\begin{lemma}
\label{lemma:pairOfFair}
For $k \geq 33$, 
in any touching quadrilaterals representation of $G^k$ 
there exists a pair of fair quadrilaterals $Q_i$ and $Q_{i+1}$ 
that are adjacent to the same sides of $Q_A$ and $Q_B$.
\end{lemma}


\begin{proof}
We can partition the set of fair quadrilaterals into 16 equivalence classes $C_{p,q}$, $1 \le p, q\le 4$, that denote the sets 
of fair quadrilaterals that are
adjacent to the same sides of $Q_A$ and $Q_B$. 
The equivalence class $C_{p,q}$ denotes that the pairs of sides $(A_p, B_q)$ are used.


Observe that if $Q_i$ is in an equivalence class $C$ and $Q_{i+1}$ is
not fair, then since $Q_{i+1}$ must be adjacent to a 
corner, $Q_{i+2}$ cannot be in the equivalence class $C$.
Thus, when we sweep through the chain of quadrilaterals $Q_1, \dots, Q_k$, 
we simultaneously proceed through the equivalence classes. 
By the pigeonhole principle, if there are at least $17$
fair quadrilaterals, then at least two of them must be in the same
equivalence class.
Combining that with the fact that there are at most $16$ 
quadrilaterals
that are not fair completes our proof.
\qed
\end{proof}

Before continuing with the proof of Theorem~\ref{thm:moreThan4}, 
we include the following observation, partially illustrated in Figure~\ref{fig-michael}(c):

{\bf Observation 2:}
If there are two regions $R,S$ touching in some nontrivial interval $I= (a,b)$ then at $a$, there is a corner
of $R$ or $S$. The same holds for corner $b$.

Using Observation~2, we see that each interval that is shared by two adjacent polygons ends at two of the corners of the two polygons.
Now, let $(Q_i, Q_{i+1})$ be a pair of fair same-sided quadrilaterals, touching sides $A_p$ and $B_q$. 
Since $Q_i$ is fair, the two corners associated with the adjacency of
$Q_A$ must belong to $Q_i$ and the other two corners of $Q_i$
are associated with the adjacency with $Q_B$.
The same applies for $Q_{i+1}$.
Since $Q_i$ and $Q_{i+1}$ have to be adjacent, the two sides next to each other touch. 
From Observation~2, at least two corners of $Q_i$
or $Q_{i+1}$ are involved in the adjacency.
For reference, label these two corners as $c_1$ and $c_2$.
The quadrilateral $Q_{l_i}$, corresponding to node $l_i$, must 
touch quadrilaterals $Q_A$, $Q_i$ and $Q_{i+1}$.
If $c_1$ (or $c_2$) were also associated with an adjacency
to $Q_A$ then $Q_{l_i}$ could not be adjacent to
all three quadrilaterals simultaneously.
Therefore, $c_1$ and $c_2$ must correspond to adjacencies with
$Q_B$.
A similar argument for $Q_{r_i}$ shows that
neither $c_1$ nor $c_2$ can correspond to adjacencies with $Q_B$
either.
However, this is a contradiction as all four corners
of both quadrilaterals are either associated with the adjacency with
$Q_A$ or with $Q_B$.
\qed
\end{proof}

\subsection{Five Sides Are Not Enough}

If we allow the regions to be pentagons, we must sharpen the argument a little more.

\begin{lemma}
\label{lemma:moreThanFive}
For $k \geq 71$, in any touching pentagons representation
for $G^k$, 
there exists a triple of fair pentagons $P_i, P_{i+1}, P_{i+2}$ 
adjacent to the same sides of $P_A$ and $P_B$.
\end{lemma}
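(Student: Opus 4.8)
The plan is to follow the proof of Lemma~\ref{lemma:pairOfFair} almost verbatim, updating three counts for pentagons and strengthening the final pigeonhole step from producing a pair to producing a triple. First I would set up the pentagon analogues of the earlier machinery: write $P_A,P_B,P_C$ for the pentagons representing $A,B,C$ and $P_i$ for the pentagon of chain node $i$, and call $P_i$ \emph{fair} if it is adjacent to no corner of $P_A$ or $P_B$. By Observation~2, a fair $P_i$ then touches $P_A$ along a single side $A_p$ using two of its own corners, and likewise touches $P_B$ along a single side $B_q$; this lets me define the equivalence classes $C_{p,q}$ of fair pentagons sharing the side pair $(A_p,B_q)$. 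Since each pentagon has five sides, there are now $5 \times 5 = 25$ such classes rather than $16$.

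Next I would redo the counting. By Observation~1 each of the $c = 10$ corners of $P_A$ and $P_B$ is adjacent to at most two disjoint chain pentagons, so at most $20$ of $P_1,\dots,P_k$ are non-fair; hence for $k \ge 71$ there are at least $71 - 20 = 51$ fair pentagons.

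The one structural step that must be re-examined, and which I view as the crux, is the monotone-sweep property: as we traverse the chain $P_1,\dots,P_k$, the side of $P_A$ (and of $P_B$) that is touched advances monotonically, so that if $P_i$ lies in class $C$ and $P_{i+1}$ is not fair, then $P_{i+1}$ straddles a corner and forces $P_{i+2}$ into a strictly later class. This argument is exactly the one used for quadrilaterals and transfers directly, since it depends only on the cyclic order in which the chain polygons touch the boundaries of $P_A$ and $P_B$, not on the number of sides. Its consequence is again that the fair pentagons of any fixed class occupy consecutive positions in the chain, so a class containing three fair pentagons automatically contains a consecutive triple $P_i,P_{i+1},P_{i+2}$.

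Finally I would apply the pigeonhole principle: the at least $51 = 2\cdot 25 + 1$ fair pentagons cannot be spread across the $25$ classes without some class receiving at least three of them, and by the previous paragraph this class supplies the required consecutive triple adjacent to the same sides $A_p$ and $B_q$. The step from a pair to a triple (and hence the jump in the threshold to $k \ge 71$) is precisely what the subsequent pentagon argument will need, because a fair pentagon pins down only four of its five corners, leaving the eventual contradiction one degree of freedom more than in the quadrilateral case.
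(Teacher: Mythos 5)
Your proposal is correct and follows essentially the same route as the paper's own proof: the same count of at most $20$ non-fair pentagons from the $10$ corners of $P_A$ and $P_B$, the same $5\times 5=25$ equivalence classes, the same sequential-class (sweep) property inherited from Lemma~\ref{lemma:pairOfFair}, and the same pigeonhole step with $2\cdot 25+1=51$ fair pentagons forcing a consecutive triple for $k\geq 71$. The only difference is expository: you spell out the sweep argument explicitly, whereas the paper simply recalls it from the quadrilateral case.
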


\begin{proof}
We prove this along the same lines as the proof 
for Lemma~\ref{lemma:pairOfFair}.
As before, we can see that for a total of $10$ corners, at most 20 pentagons 
of the inner chain are not fair. 
The number of equivalence classes of pentagons with sides solely on the same side of $P_A$ and $P_B$ is at most $5\times 5=25$.
Recall that pentagons belonging to the same equivalence class
are sequential.
Since we aim now for triples and not just for pairs, using the
pigeonhole principle, if we have more than $2\times 25 = 50$ fair
pentagons at least three
must belong to the same equivalence class.
Therefore, as long as $k > 20+50=70$, there exists a triple
of fair same-sided pentagons.
\qed
\end{proof}

\begin{theorem}
For $k \geq 71$, there does not exist a touching polygons representation for $G^k$ in which all regions have complexity five or less.
\end{theorem}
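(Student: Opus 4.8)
The plan is to deduce the theorem directly from Lemma~\ref{lemma:moreThanFive} by a geometric argument that parallels the proof of Theorem~\ref{thm:moreThan4}, but that exploits the single ``extra'' corner each pentagon has beyond the four corners pinned to $P_A$ and $P_B$. Assume for contradiction that $G^k$ with $k \geq 71$ admits a touching pentagons representation, and fix the embedding of Figure~\ref{fig-michael}(a) with $P_A$ on the left, $P_B$ on the right, and $P_C$ on top. By Lemma~\ref{lemma:moreThanFive} there is a triple of fair same-sided pentagons $P_i, P_{i+1}, P_{i+2}$, all touching the same side $A_p$ of $P_A$ and the same side $B_q$ of $P_B$. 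Each such pentagon has exactly two corners on $A_p$ and two on $B_q$; since it has five corners in all, the fifth (extra) corner lies strictly off both the line of $A_p$ and the line of $B_q$, and a simple edge count shows it sits on either the upper or the lower boundary of the pentagon, but not both.

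Next I would analyze each of the two consecutive contacts $P_i$--$P_{i+1}$ and $P_{i+1}$--$P_{i+2}$, each of which runs from the $P_A$ side to the $P_B$ side. Consider $P_i$--$P_{i+1}$. The degree-three node $l_i$ forces $P_{l_i}$ to touch $P_A$, $P_i$, and $P_{i+1}$ simultaneously, so there must be a pocket near $A_p$ in which $P_i$ and $P_{i+1}$ fail to touch; symmetrically, $r_i$ forces such a pocket near $B_q$. By Observation~2 the two endpoints of the $P_i$--$P_{i+1}$ contact interval are corners of $P_i$ or $P_{i+1}$, and to leave room for these pockets the left endpoint must lie off $A_p$ while the right endpoint lies off $B_q$. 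A left endpoint, lying near $P_A$, cannot be one of the two $B_q$-corners, and symmetrically for the right endpoint; hence, as the only corners of a fair pentagon lying off both lines are the extra corners, each endpoint of the contact is the extra corner of $P_i$ or of $P_{i+1}$. Since the two endpoints are distinct and each pentagon owns a single extra corner, one endpoint must be the extra corner of $P_i$ and the other the extra corner of $P_{i+1}$. In particular the extra corner of $P_{i+1}$ is used as an endpoint of the $P_i$--$P_{i+1}$ contact, which forces it onto the lower boundary of $P_{i+1}$.

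Applying the same argument to the contact $P_{i+1}$--$P_{i+2}$, using $l_{i+1}$ and $r_{i+1}$, forces the extra corner of $P_{i+1}$ onto its upper boundary. But a pentagon's unique extra corner cannot lie on both its upper and lower boundary, which is the desired contradiction; this is precisely why a triple, rather than the pair used for quadrilaterals, is needed, since only a pentagon sandwiched between two same-sided neighbors is pinned from both sides. The step I expect to be the main obstacle is the careful justification that the room demanded by $P_{l_i}$ and $P_{r_i}$ genuinely forces both endpoints of each contact off the respective lines (equivalently, that a contact reaching $A_p$ or $B_q$ seals off the pocket), together with ruling out degenerate contacts and handling possible non-convexity of the pentagons; once this localization of the two bends to the extra corners is established, the counting contradiction is immediate.
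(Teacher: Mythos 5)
Your proposal is correct and follows essentially the same route as the paper: it invokes Lemma~\ref{lemma:moreThanFive} to get a triple of fair same-sided pentagons, uses the regions for $l_i, r_i, l_{i+1}, r_{i+1}$ to force both contact intervals off $P_A$ and $P_B$, applies Observation~2 to place corners at the interval endpoints, and then counts corners of $P_{i+1}$ (four pinned to $A_p$ and $B_q$, plus one for each of the two contacts on opposite sides by planarity) to exceed five. Your ``extra corner on both upper and lower boundary'' phrasing is just a restatement of the paper's conclusion that $P_{i+1}$ would need six distinct corners, and the informal steps you flag as obstacles are treated at the same level of rigor in the paper itself.
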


\begin{proof}
From Lemma~\ref{lemma:moreThanFive}, 
let $(P_i, P_{i+1}, P_{i+2})$ be a triple of fair same-sided pentagons, touching sides $A_p$ and $B_q$. 
From Observation~2, we know that each interval that is shared by two polygons ends at two of the corners of the two polygons. 
Consequently, four of the five corners for $P_i$, $P_{i+1}$ and 
$P_{i+2}$ are adjacent to $A_p$ or $B_p$.
Since $P_i$ and $P_{i+1}$ have to be adjacent, the two sides next to each other touch. 
However, since there exist the polygonal regions representing $r_i$ and $l_i$, as before, the interval where $P_i$ and $P_{i+1}$ touch is disjoint from the regions $P_A$ and $P_B$.
As each region can have at most five corners, four of which
are adjacent to either $P_A$ or $P_B$, from Observation~2
we know that one corner from the adjacency with $P_i$ 
and $P_{i+1}$ belongs to $P_i$ and one belongs to $P_{i+1}$.
Similarly, we know that the adjacencies of
$r_{i+1}$ and $l_{i+1}$ imply that one corner of the adjacency
of $P_{i+1}$ and $P_{i+2}$ belongs to $P_{i+1}$ and the other belongs
to $P_{i+2}$.
Due to planarity, we also know that $P_i$ and $P_{i+2}$ lie
on opposite sides of $P_{i+1}$.
As these corners cannot be adjacent to $P_A$ or $P_B$,
we see that $P_{i+1}$ must have six distinct
corners, two adjacent to $P_A$,
two to $P_B$, one to $P_i$ and one to $P_{i+2}$, a contradiction.
\qed
\end{proof}

Note that six-sided polygons are indeed sufficient to represent the graph in Figure~\ref{fig-michael}(a). In particular, for fair polygons $P_i$ and $P_{i+1}$, we can use three segments on the lower side of $P_i$, while the upper side of $P_{i+1}$ consists of only one segment completely overlapping the middle of the three segments from the lower side of $P_i.$


\section{Touching Hexagons Representation}
\label{t6g}

In this section, we present a linear-time algorithm that takes as input a 
planar graph $G=(V,E)$ and produces a representation of $G$ in which all regions are convex hexagons.
This algorithm and the fact that every touching hexagons graph is
necessarily planar proves that the class of planar graphs
is equivalent to the class of touching hexagons graphs.

\subsection{Algorithm Overview}
\label{sec:t6gOverview}

We assume that the input graph $G=(V,E)$ is a fully triangulated planar graph with $|V|=n$ vertices. If the graph is planar but not fully triangulated, we can augment it to a fully triangulated graph with the help of dummy vertices and edges, run the algorithm below and remove the polygons that correspond to dummy vertices. 

Traditionally, planar graphs are augmented to fully triangulated graphs by adding edges to each non-triangular face. Were we to take this approach, however, when we remove the dummy edges we would have to perturb the resulting space partition to remove polygonal adjacencies. As this is difficult to do, we convert our input graph to a fully triangulated one by adding one additional vertex to each face and connecting it to all vertices in that face. The above approach works if the input graph is biconnected. Singly-connected graphs must first be augmented to biconnected graphs as follows. Consider any articulation vertex $v$, and let $u$ and $w$ be consecutive neighbors of $v$ in separate biconnected components. Add a new vertex $z$ and the edges $(z, u)$ and $(z,w)$. Iterating for every articulation point biconnects $G$ and results in an embedding in which each face is bounded by a simple cycle.
Since determining articulation points and 
adding vertices and edges to faces 
can be done in linear time,
the augmentation step incurs only a
linear amount of additional time to the main algorithm and adds
at most a linear number of vertices and edges to the original graph.


The algorithm has two main phases.
The first phase  computes the canonical labeling.
In the second phase we create regions with slopes 0, 1, -1 out of an initial isosceles right-angle triangle, by processing vertices in the canonical order. Each time a new vertex is processed, a new region is carved out of one or more already existing regions. At the end of the second phase of the algorithm we have a right-angle isosceles triangle that has been partitioned into exactly $n=|V|$ convex regions, each with at most 6 sides.  We show that creating and maintaining the regions requires linear time in the size of the input graph. 
We illustrate the algorithm with an example; see Figure~\ref{fig-sample}.

\begin{figure}[thb]
\begin{center}
\ifShowFigures
\begin{tabular}{cccc}
\includegraphics{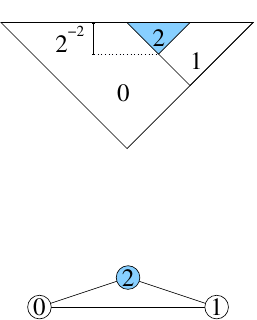} &
\includegraphics{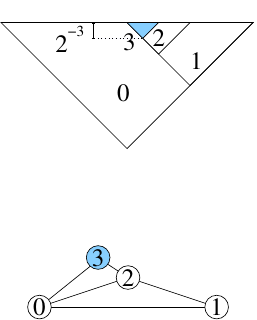} &
\includegraphics{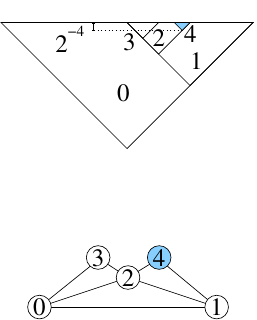} &
\includegraphics{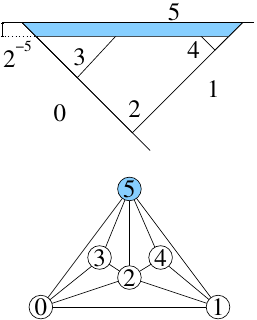}\\
(a) & (b) & (c) & (d)\\
\end{tabular}
\else
{\bf Figure skipped because causing problems with my printer!}
\fi
\caption{\small\sf \label{fig-sample} Incremental construction of the touching hexagons representation of a graph. Shaded vertices on the bottom row and shaded regions on the top row are processed at each step. In general, the region defined at step $i$ is carved at distance $2^{-i}$ from the active front on the top. Note that the top row forms a horizontal line at all times.}
\end{center}
\end{figure}


\vspace{-.4cm}\subsection{Region Creation\label{sec_t6galg}}
In this section we describe the $n$-step incremental process of inserting new regions in the order given by the canonical labeling, where $n=|V|$. The regions will be carved out of an initial triangle with coordinates $(0,0), (-1,1), (1,1)$. The process begins by the creation of $R_0$, $R_1$, and $R_2$, which correspond to the first three vertices, $v_0,v_1,v_2$; see Figure~\ref{fig-sample}(a). Note that the first three vertices in the canonical order form a triangular face in $G$ and hence must be represented as mutually touching regions. 

At step $i$ of this process, where $2 < i < n$, region $R_i$ will be carved out from the current set of regions. Define a region as ``active'' at step $i$ if it corresponds to a vertex that has not yet been connected to all its neighbors. An invariant of the algorithm is that all active regions are non-trivially tangent to the top side of the initial triangle, which we refer to as the ``active front.''

\iffalse
New vertices are created in one of two ways, depending on the degree of the current node, $v_i$, in the graph induced by the first $i$ vertices, $G_{i}$. By criterion 2 of the canonical labeling and the active regions invariant, $v_i$ is connected to 2 or more consecutive vertices on the outer face of $G_{i-1}$:

\begin{enumerate}
\item If $d_{G_i}(v_i)>2$ then $R_i$, the region corresponding to $v_i$, is a quadrilateral carved out of all but the leftmost and rightmost regions, by a horizontal line segment that is at distance $1/2^i$ from the active front; see Figure~\ref{fig-sample}(d). Note that all but the leftmost and rightmost neighbors of $v_i$ are removed from the set of active regions as their corresponding vertices have been connected to all their neighbors. Region $R_i$ is added to the new set of active regions. Call this a ``type 1 carving.''

\item If $d_{G_i}(v_i)=2$, let $R_a$ and $R_b$ be its neighbors on the frontier. Region $R_i$ is then carved out as a triangle from either $R_a$ or $R_b$ 
\end{enumerate}
\else
By criterion 2 of the canonical labeling and the active regions invariant, the current node $v_i$ 
is connected to two or more consecutive 
vertices on the outer face of $G_{i-1}$ and consecutive
regions on the active front.
Let $v_a$ and $v_b$ be the leftmost
and rightmost neighbors of $v_i$ on the outer face 
with corresponding (active front) faces $R_a$ and $R_b$.
The new region $R_i$ is defined to be an isosceles trapezoid 
formed by carving a horizontal line segment that is at
distance $1/2^i$ from the active front and intersects the
right side of $R_a$ and the left side of $R_b$.
The left (respectively, right) side of the trapezoid has
slope $-1$ (respectively, $+1$).
If the right side of $R_a$ has slope $+1$, a portion of its
region is necessarily carved out by $R_i$.
The same applies if the left side of $R_b$ has slope $-1$.
The regions between $R_a$ and $R_b$ have their upper segment
carved and no longer being tangential to the active front
are removed from the set of active regions.
In addition, $R_i$ is added to the list of active regions.
In Figure~\ref{fig-sample}(d), for example, both $R_0$
and $R_1$ have appropriate slopes and so are not carved
and $R_2$, $R_3$, and $R_4$ are all removed from the active front.

Note, that if $d_{G_i}(v_i) = 2$, then the length of the 
horizontal segment is $0$ and the shape is an isosceles triangle.
In this case, the geometry is such that exactly one
of $R_a$ or $R_b$ must necessarily be carved.
See Figures~\ref{fig-sample}(a-c).
\fi

\begin{lemma}
The above algorithm produces convex regions with at most 6 sides.
\end{lemma}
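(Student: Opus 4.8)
The plan is to prove, by induction on the step number $i$, a strengthened invariant that simultaneously controls complexity, convexity, and slopes of every region. Concretely, I would show that after each step every region $R$ is a convex polygon whose boundary, traversed counterclockwise, is a (possibly proper) cyclic subsequence of six edges with the following slopes in order: a horizontal bottom ($0$), a \emph{lower-right} edge of slope $+1$, an \emph{upper-right} edge of slope $-1$, a horizontal top ($0$), an \emph{upper-left} edge of slope $+1$, and a \emph{lower-left} edge of slope $-1$. Since this fixed cyclic sequence of edge directions turns consistently to the left (the successive $45^\circ$ and $90^\circ$ turns sum to $360^\circ$), any polygon realizing a subsequence of it is automatically convex with at most six sides, so the lemma follows from the invariant.

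For the base case I would check directly that $R_0,R_1,R_2$, obtained by splitting the initial triangle $(0,0),(-1,1),(1,1)$, are triangles or trapezoids of the stated form. For the inductive step I would examine how a single step, inserting $R_{\text{new}}$ between its leftmost and rightmost active neighbors $R_a$ and $R_b$, can alter an existing region; only the regions in $\{R_a,R_b\}\cup\{\text{interior neighbors}\}$ are touched, and I would handle the three kinds of change separately. First, $R_{\text{new}}$ itself is the trapezoid (or, in the degree-$2$ case, the triangle) with bottom $0$, left side $-1$, right side $+1$, and top $0$, which is already of the required form. Second, each interior neighbor has its entire top lowered to the horizontal bottom of $R_{\text{new}}$ at height $1-2^{-i}$; this merely shortens the existing slanted sides and replaces the horizontal top by a lower horizontal top, so the cyclic structure and convexity are preserved and the region is finalized. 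Third, and most importantly, $R_a$ is carved exactly when its right side has slope $+1$: the slope-$-1$ cut of $R_{\text{new}}$ meets that slope-$+1$ side in a single point $p$ and deletes everything above $p$, so $R_a$'s right boundary becomes a slope-$+1$ segment up to $p$ followed by a new slope-$-1$ segment from $p$ to the active front. This introduces precisely the upper-right edge of the invariant while leaving the lower-right, bottom and left edges intact; the symmetric statement holds for $R_b$ and its upper-left edge.

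The step I expect to be the crux is bounding the number of added edges per corner by one, i.e.\ showing that the upper-right edge is created at most once over the whole life of a region (and likewise the upper-left). Here I would use the carving rule itself: a region is carved on the right only while the topmost part of its right boundary still has slope $+1$; after the first such carve that topmost part has slope $-1$, so any later region having this region as its leftmost neighbor $R_a$ simply abuts the existing slope-$-1$ edge and triggers no further carving. Hence the right boundary never acquires more than the two segments (slope $+1$ then $-1$), the left boundary never more than two (slope $-1$ then $+1$), and together with the single top and single bottom this yields the bound of six. The remaining routine checks are that the newly created corner at $p$ is a genuine convex (left-turning) vertex and that lowering a top never destroys convexity, both of which follow from the fact that all edges keep their slopes in $\{-1,0,+1\}$ and appear in the prescribed cyclic order.
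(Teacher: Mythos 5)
Your proof is correct and is, at its core, the same argument the paper gives. Your six-direction cyclic invariant is an algebraic encoding of the paper's ten-shape hierarchy (Figure~\ref{fig-hierarchy}); your three cases (the new trapezoid, interior neighbors cut horizontally, and $R_a$/$R_b$ carved by a sloped line) match the paper's case analysis; and your crux --- that a carve flips the topmost slope from ``opening'' to ``closing,'' so each side is carved at most once --- is exactly the paper's observation that the hierarchy is acyclic and that closing regions cannot be carved. What your formulation buys is that convexity and the bound of six fall out automatically from the direction cycle (a convex polygon traversed counterclockwise cannot repeat an edge direction), whereas the paper argues convexity separately from the fact that cuts of convex shapes are convex, and counts sides by enumerating the figure.

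One point you dismiss as routine is the only place where the specific geometry of the algorithm enters, and the paper does address it explicitly. You assert that the slope $-1$ cut meets $R_a$'s slope $+1$ edge at a single point $p$ and ``deletes everything above $p$\,\dots\,leaving the lower-right, bottom and left edges intact.'' For that picture to be literally correct you need (i) $p$ to lie strictly above the lower endpoint of that edge, and (ii) the cut to exit through $R_a$'s top edge rather than through its far side. Both follow from the $2^{-i}$ height schedule: every vertex not on the active front was created at an earlier step and so lies at distance at least $2^{1-i}$ from the front, twice the depth $2^{-i}$ of the new cut, which gives (i); and a region created at step $m$ has top edge of length at least $2^{1-m}$, which strictly exceeds the total horizontal extent $2\cdot 2^{-i_1}+2\cdot 2^{-i_2}$ ($i_1,i_2>m$) of the at most two later corner cuts, which gives (ii). This is precisely the paper's remark that ``the lower vertex formed by each cut is at least half the distance to the active front from the previous vertices.'' I would add that your invariant is robust enough that even in the configurations excluded by (i) and (ii) the carved piece is still a cyclic subsequence of the six directions, so the lemma itself would survive; but your inductive step as written claims the specific resulting shapes, and that claim needs the height argument to be justified.
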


\begin{proof}
The convexity of the regions is obvious from the fact that regions are
created by a (linear) partitioning cut of a previous convex shape.
Note that the above algorithm leads to the creation of at most 
ten different types of regions; see Figure~\ref{fig-hierarchy}. 
Each region has a horizontal top segment, 
a horizontal bottom segment (possibly of length 0), 
and sides with slopes -1 or 1. 
Moreover, each region can be characterized as either opening 
(the first two in top row), static (the next four in the middle row), 
or closing (the last four in the bottom row), 
depending on the angles of the two sides connecting it to the 
top horizontal segment. 
At each iteration, each new region $R_i$ is an opening region.
In the new region's creation,
all affected regions except for $R_a$ and $R_b$ are carved with a 
horizontal line segment lying just below the top segment thus
having no effect on the shapes of these regions and removing
them from the active front.
Consequently, the only new region shapes possible stem from cutting $R_a$ and
$R_b$ when necessary with a slope $-1$ or $+1$ line respectively.
As the lower vertex formed by each cut is at least half the distance to 
the active front
from the previous vertices (those not on the active front), 
the only possible shapes are those shown in 
Figure~\ref{fig-hierarchy}.
Observe that closing regions cannot be carved at all.
\qed

\begin{figure}
  \begin{center}
\ifArxiv
    \includegraphics[width=0.8\textwidth]{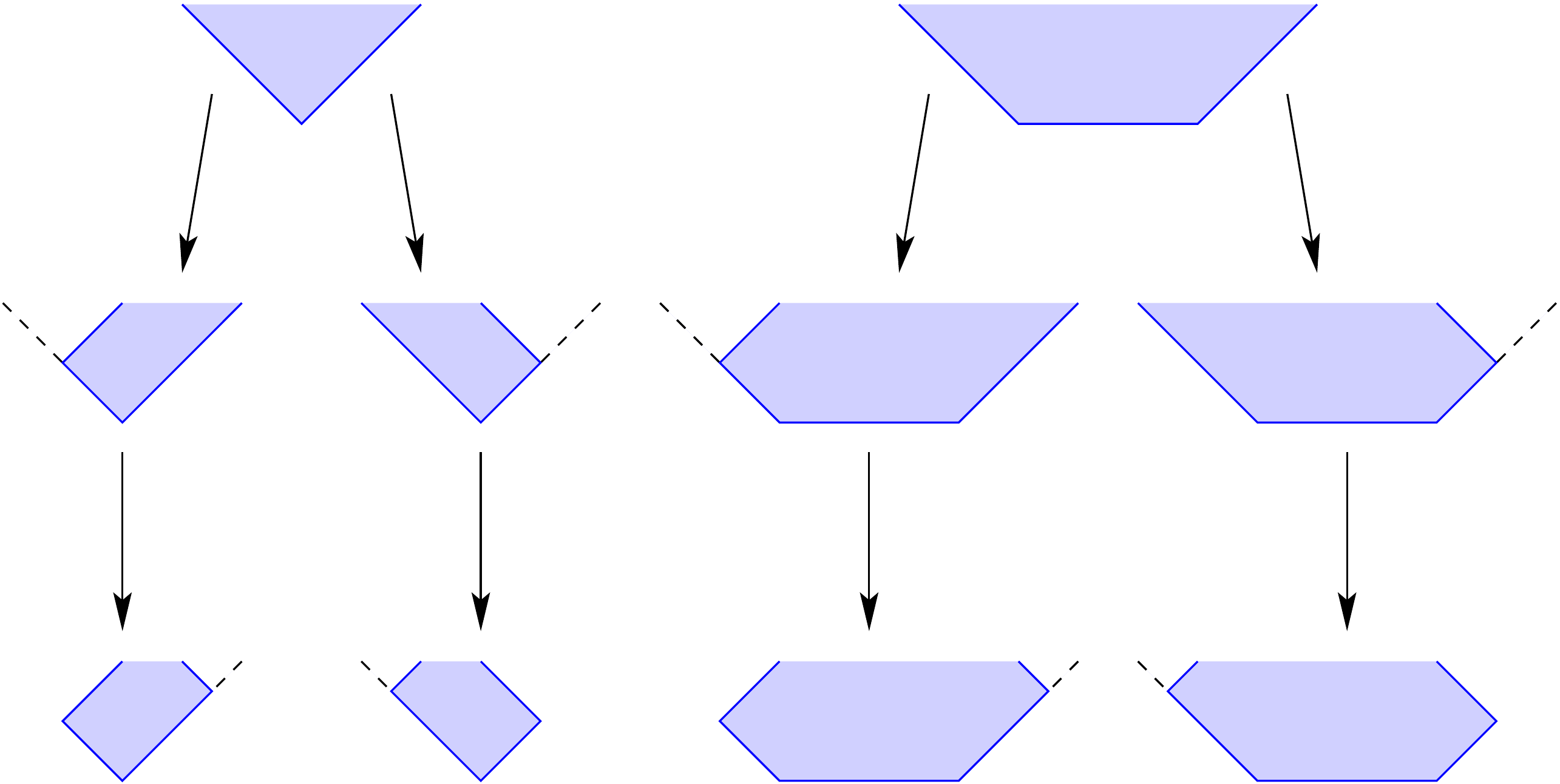}
\else
    \includegraphics[width=0.5\textwidth]{region_hierarchySimple.pdf}
\fi
  \end{center}
  \caption{\label{fig-hierarchy}
\small\sf There are a ten possible region shapes, falling into three categories: 2 opening, 4 static, and 4 closing. 
The arrows indicate carvings from one region to another.}
\end{figure}

\end{proof}

\vspace{-.4cm}\subsection{Running Time}
\ifArxiv
The above algorithm can be implemented in linear time. 
The linear-time algorithm for computing a canonical labeling of a planar graph~\cite{cp-ltadp-95} requires a planar embedding as an input. 
Recall that a planar embedding of a graph is simply a clockwise order of the neighbors of each vertex in the graph.  
Obtaining a planar embedding can be done in linear time using the algorithm by Hopcroft and Tarjan~\cite{ht-ept-74}. 

Creating and maintaining the regions in the second phase of our algorithm can also be done in linear time. We next prove this by showing that each region requires $O(1)$ time to create and requires $O(1)$ number of modifications.

Consider the creation of new regions. From criterion 2 of the canonical labeling, when we process the current vertex $v_i$, 
it is adjacent to at least two consecutive vertices on the outer face of $G_{i-1}$. By construction of our algorithm the vertices in the outer face of $G_{i-1}$ correspond to active regions and so have a common horizontal tangent. 

If $d_{G_i}(v_i)=2$, then a new region $R_i$ is carved out of one of the neighboring regions $R_a$ or $R_b$. 
Determining the coordinates of $R_i$ takes constant time, given the 
coordinates of $R_a$ and $R_b$ 
and the fact that $R_i$ will have height $1/2^i$ and 
will be tangent to the active frontier;
see Figure~\ref{fig-carving}(a).
When considering numerical precision, this exponential decrease in height
could lead to $O(n)$ bits needed per vertex; however, as we show in
Section~\ref{sec_compact} we only need the combinatorial structure and
not the precise geometric structure before
proceeding to draw the regions in a more compact and efficient form.

If $d_{G_i}(v_i)>2$, then all regions between and possibly including $R_a$
and $R_b$ will have their corresponding regions carved, 
in order to create the new region $R_i$;
see Figure~\ref{fig-carving}(b).
In this case the coordinates of $R_i$ can also be determined 
in constant time given the coordinates of $R_a$ and $R_b$ and 
the fact that $R_i$ will have height $1/2^i$ and 
will be tangent to the active frontier. 

\iffalse
\begin{figure}[!t]
\begin{center}
\includegraphics[width=14cm]{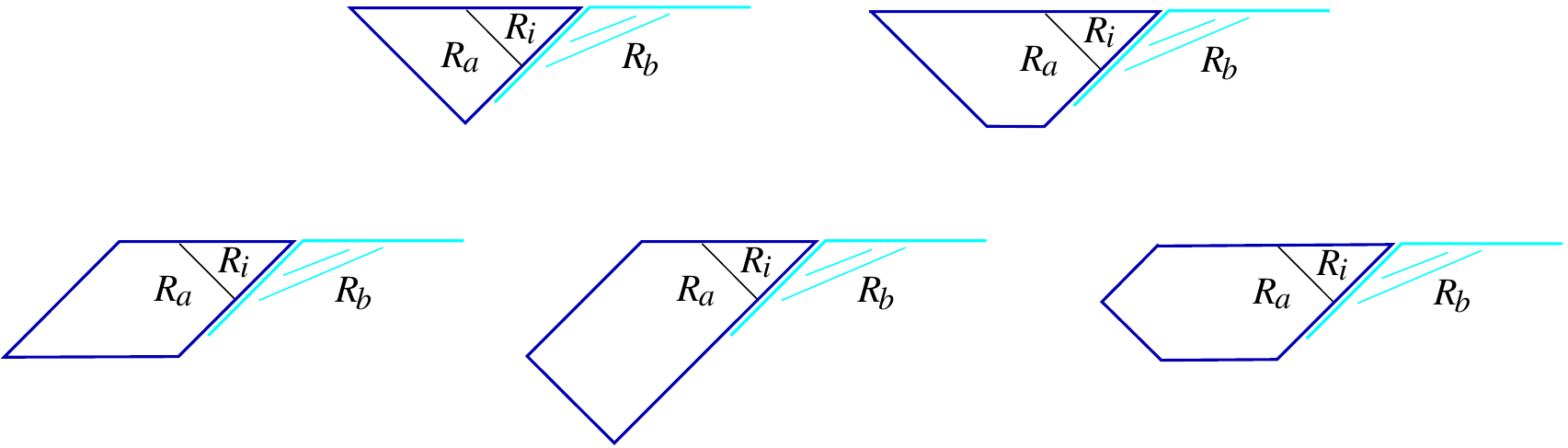}
\caption{\label{fig-carving}
\small\sf Introducing region $R_i$ between $R_a$ and $R_b$, assuming $R_i$ is carved out of $R_a$. All the possible cases are shown, assuming that $R_a$ and $R_b$ were convex, at most 6-sided regions with slopes 0, 1, -1. (There are five more symmetric cases when $R_i$ is carved out of $R_b$.) Note that these five regions correspond to the non-filled regions from the region-creating hierarchy in Fig.~\ref{fig-hierarchy} with two static regions in the first row and the three closing regions in the second row.}\vspace{-.4cm}
\end{center}
\end{figure}
\else
\begin{figure}
\begin{center}
\begin{tabular}{cc}
\input{exampleCutA.tex} & \input{exampleCutB.tex}\\
(a) & (b)
\end{tabular}
\caption{\label{fig-carving}
\small\sf Examples of introducing (the shaded) 
region $R_i$ between $R_a$ and $R_b$, 
(a) when $R_i$ is carved out of $R_a$ and $d_{G_i}(v_i) = 2$, and
(b) when $R_i$ is carved out of all neighbors of $v_i$ in $G_i$ except
$R_a$.}
\end{center}
\end{figure}
\fi

Consider the modifications of existing regions. 
As can be seen from the hierarchy of regions on Figure~\ref{fig-hierarchy},
there are exactly 10 different kinds of regions and each region begins 
as either an isosceles trapezoid or triangle 
and undergoes at most three modifications including the final horizontal
cut removing the region from the active front.
Moreover, once a region goes from one type to the next, 
it can never change back to the same type.
Finally note that the total number of region modifications is 
proportional to $|E|$ and since $G$ is planar, $|E|=O(|V|)$.
Thus, each region needs at most a constant number of modifications 
from the time it is created to the end of the algorithm. 

The algorithm described in this section, yields the following lemma:
\else
The above algorithm can be implemented in linear time.
We provide here a simple descripton; further details
can be found in~\cite{arxiv-version}.

Note that both a planar embedding~\cite{ht-ept-74} and
a canonical labeling of the embedding~\cite{cp-ltadp-95}
for a planar graph can be computed in linear time.
The remainder of the algorithm's time is spent in creating
and maintaining the regions.
Recall that each new region is created by carving out
an area from a set of other regions in the outer
face.
The creation of each region can be done in constant
(amortized) time by charging the process to the regions
that are modified, carved.
Thus, we only need to bound the number of times a region
can be modified.
As can be seen from the acyclic hierarchy of regions 
in Figure~\ref{fig-hierarchy},
there is a limited set of possible shapes that a region can 
take as it is carved.
Including the final horizotal cut removing the region
from the outer face, each region can be modified, and hence
charged, at most three times.
Noting that each region corresponds to a unique node, we
obtain the following lemma:
\fi

\begin{lemma}
\label{lemma:t6g}
For any planar graph $G$ on $n$ vertices, we can construct
in linear time a touching hexagons representation
of $G$ with convex regions.
Moreover, if the graph is a triangulation, the representation is
also a tiling.
\end{lemma}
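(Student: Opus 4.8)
The plan is to assemble the lemma from the construction of Section~\ref{sec_t6galg} together with the two facts already established: that every region produced is a convex hexagon (the preceding lemma) and that the whole process runs in linear time (the running-time analysis above). What remains to be verified is (i) that the touching structure of the regions faithfully encodes the adjacencies of $G$, and (ii) that the reduction to triangulations and the subsequent removal of dummy regions are harmless. I would organize the proof around these two points, since the shape and timing guarantees are already in hand.

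First I would reduce to the triangulated case. Given an arbitrary connected planar graph, I apply the augmentation described at the start of this section: biconnect it by adding a vertex and two edges at each articulation point, then fully triangulate it by inserting one new vertex per face joined to all vertices on that face. This adds only $O(n)$ vertices and edges and costs linear time, so it is enough to prove the lemma for a fully triangulated graph $G'$ and then delete the regions of the dummy vertices. I then compute a canonical labeling of $G'$ in linear time and run the region-creation process on it.

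The heart of the argument is an induction on the canonical order showing that, immediately after step $i$, the regions $R_0,\dots,R_i$ form a correct touching-hexagons representation of the induced subgraph $G_i$, and that the \emph{active} regions are exactly the faces tangent to the active front, appearing left to right in the same order as the outer boundary $C_i$ of $G_i$. For the inductive step I use criterion~2 of the canonical labeling: the neighbors of $v_i$ in $G_{i-1}$ form a consecutive subinterval $v_a,\dots,v_b$ of $C_{i-1}$, so by the induction hypothesis the corresponding active regions $R_a,\dots,R_b$ occur consecutively along the active front. The horizontal cut at height $1/2^i$ that defines $R_i$ meets the right side of $R_a$ and the left side of $R_b$, so $R_i$ shares a non-trivial interval with each of $R_a,\dots,R_b$; this produces exactly the new edges of $G_i$. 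I also check that $R_i$ becomes the unique new active region spanning the interval formerly occupied by $R_a,\dots,R_b$, while the regions strictly between $R_a$ and $R_b$ have their top segments cut and leave the active front, which preserves the invariant.

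The step I expect to be the main obstacle is ruling out \emph{spurious} adjacencies, i.e.\ showing that two regions touch non-trivially only when their vertices are adjacent. This is where the geometric discipline of the construction matters: because each cut is placed at half the remaining distance to the active front, every region that has left the front sits strictly below it and can never acquire a new side-contact with a later region; and because all sides use slopes in $\{-1,0,+1\}$ with the limited shape repertoire of Figure~\ref{fig-hierarchy}, the only contacts created at step $i$ are the intended ones along the sides of $R_a$ and $R_b$ (cf.\ Figure~\ref{fig-carving}). Finally, to finish the lemma I would delete the regions of the dummy vertices: since every dummy vertex lies interior to a face and its neighbors lie on that face's boundary, removing its region destroys no adjacency among the original vertices and merely opens a convex hole. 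When $G$ is already a triangulation no dummies are introduced, the initial triangle is partitioned with no leftover area, and therefore the representation is a tiling.
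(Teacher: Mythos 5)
Your proposal is correct and takes essentially the same route as the paper: the paper establishes this lemma by combining the augmentation to a triangulation from Section~\ref{sec:t6gOverview}, the carving construction of Section~\ref{sec_t6galg} together with its convexity/six-sides lemma, and the running-time analysis, exactly as you assemble them. Your explicit induction showing that adjacencies are faithfully encoded and that no spurious contacts arise is left implicit in the paper (as the ``active front'' invariant), so your write-up is, if anything, slightly more complete on that point.
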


\section{Compaction Algorithm for Quadratic Area}
\label{sec_compact}

The algorithm given in Section~\ref{sec_t6galg} provides a touching hexagons representation of any planar graph. The incremental process carves out polygons within an ever smaller band of active front, therefore in practice the drawing is highly skewed, leading to exponential area. 
In this section we describe a compaction algorithm to get a 
drawing on an $O(n) \times O(n)$ grid.

When looking at the vertices and edges created in the algorithm for touching hexagons,
 if the horizontal edges are ignored, then the resulting graph is a ``binary'' tree, in the sense that each vertex has a degree of no more than 2. 
See Figure~\ref{fig:cappedTree}.
From this observation, we can generalize the compaction problem 
to the tree drawing routine described below.

We start with some definitions.
Order the nodes according to their inorder traversal.
A \define{cap set} is an ordered subset of the nodes such that
\begin{enumerate}
\item The first (resp. last) node has exactly one child, 
the left (resp. right) child.
\item All other nodes are leaf nodes,
with the addition that for the outermost cap
the first and last nodes are also leaf nodes.
\item The ordering of nodes in the cap set follows
the same inorder traversal ordering.
\item Any two cap sets are non-overlapping.
However, one may be nested in another,
in the sense that
if one set $C$ goes from node $a$ to node $b$ and is 
contained in a second set $C'$ then 
there exist two {\em consecutive} nodes $i,j$ in $C'$
such that $i < a < b < j$.
\end{enumerate}

\iffalse
\begin{figure}
\begin{center}
\includegraphics[width=\textwidth]{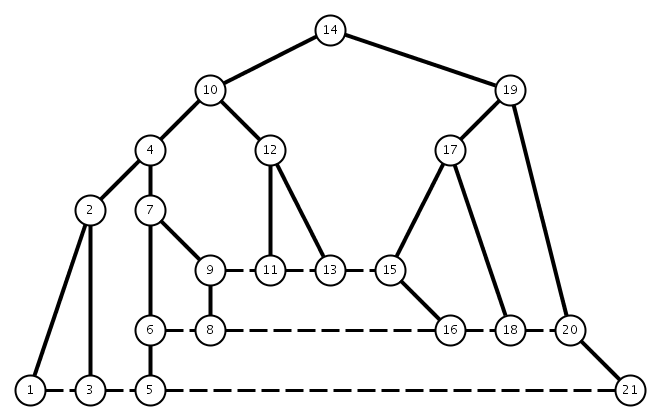}
\caption{\small\sf An example of a capped binary tree.
The nodes are numbered according to their inorder traversal.
The dashed lines represent the capped sets.
Note the nesting of all capped sets in this case is 
$(1,3,5, (6,8, (9, 11, 13, 15),
16, 18, 20 ), 21 )$.}
\label{fig:cappedTree}
\end{center}
\end{figure}
\else
\begin{figure}
\begin{tabular}{ccc}
\includegraphics[width=.25\textwidth]{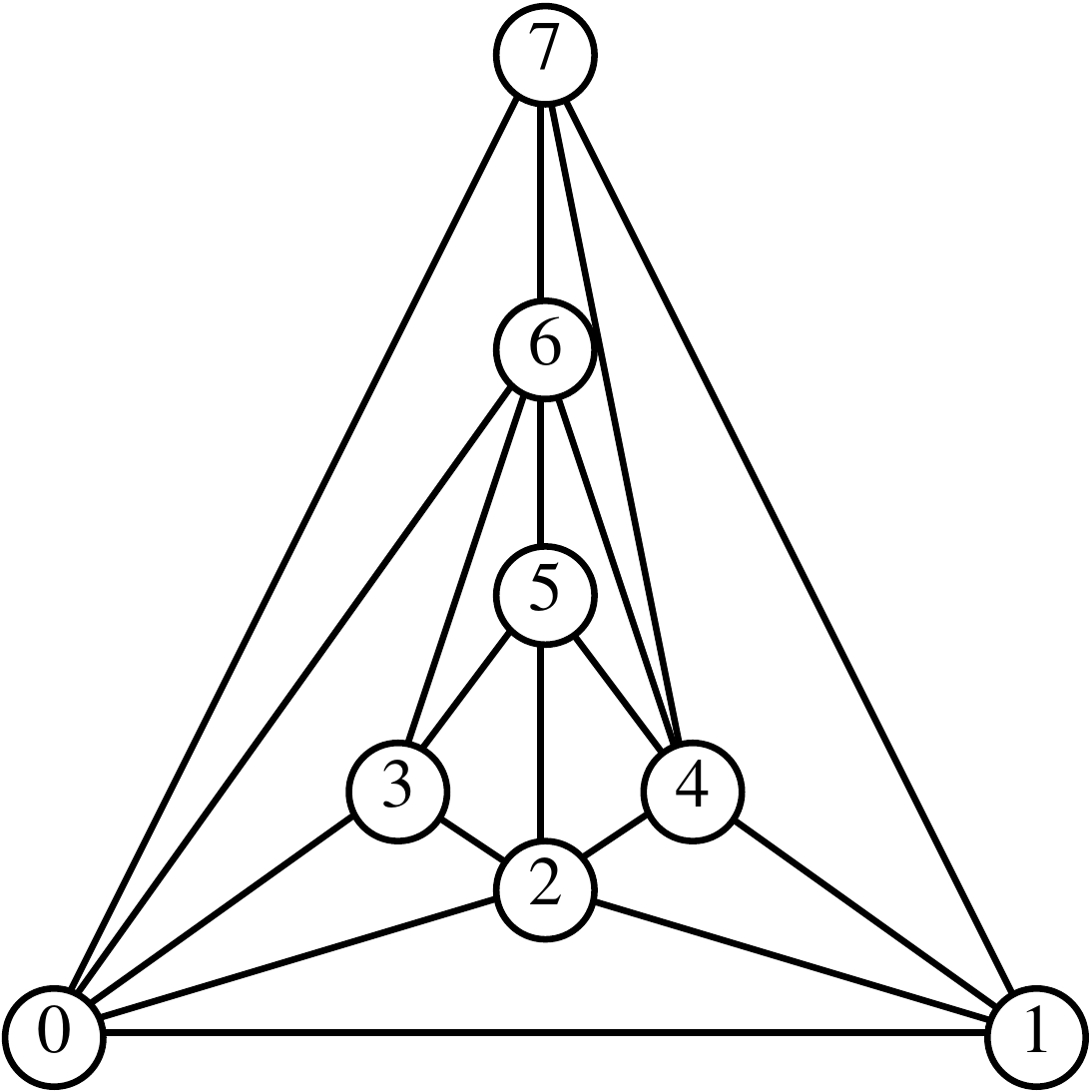} &
\hspace{-0.25in}
\includegraphics[width=.3\textwidth]{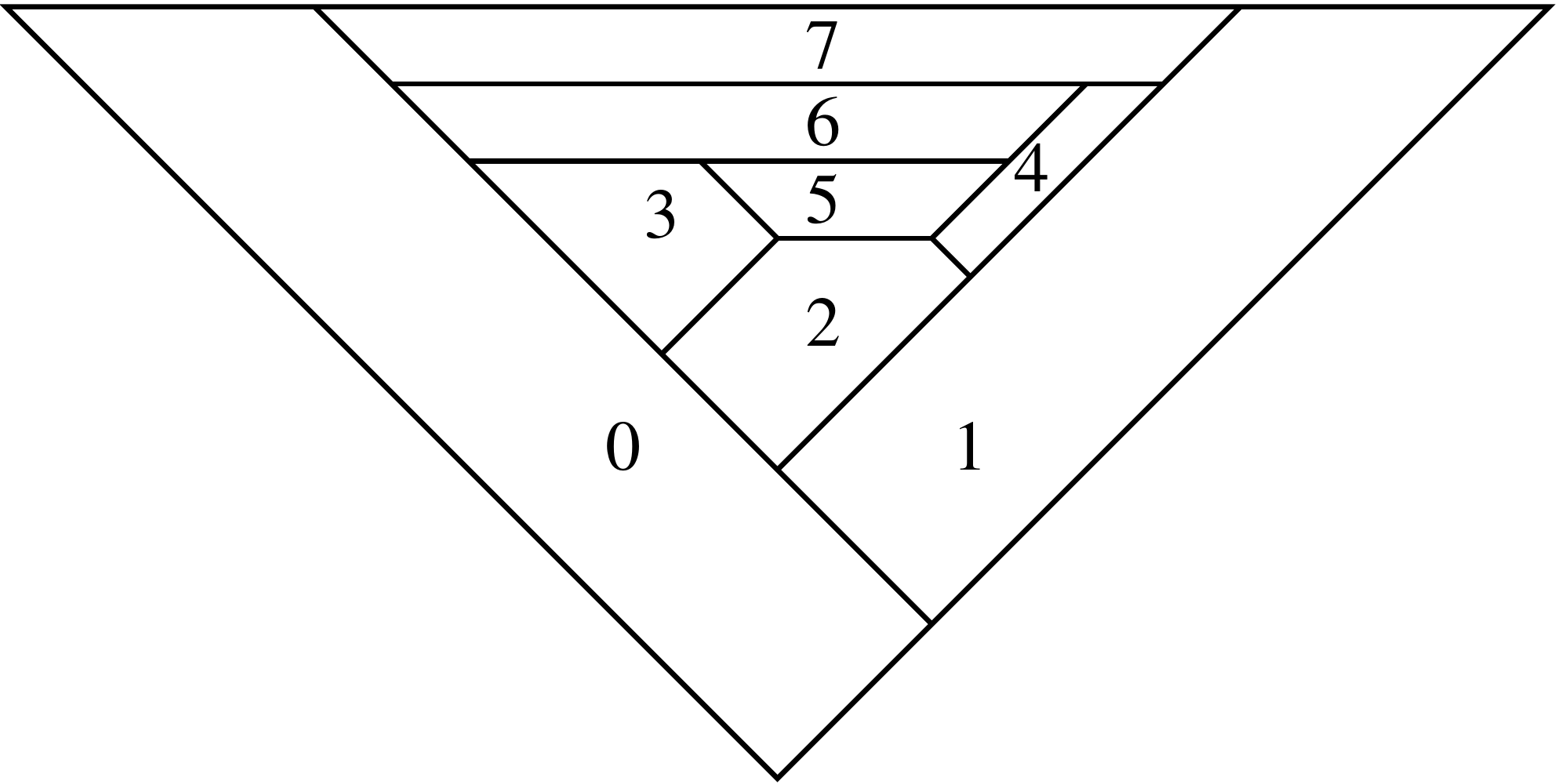} &
\hspace{-0.25in}
\includegraphics[width=.4\textwidth]{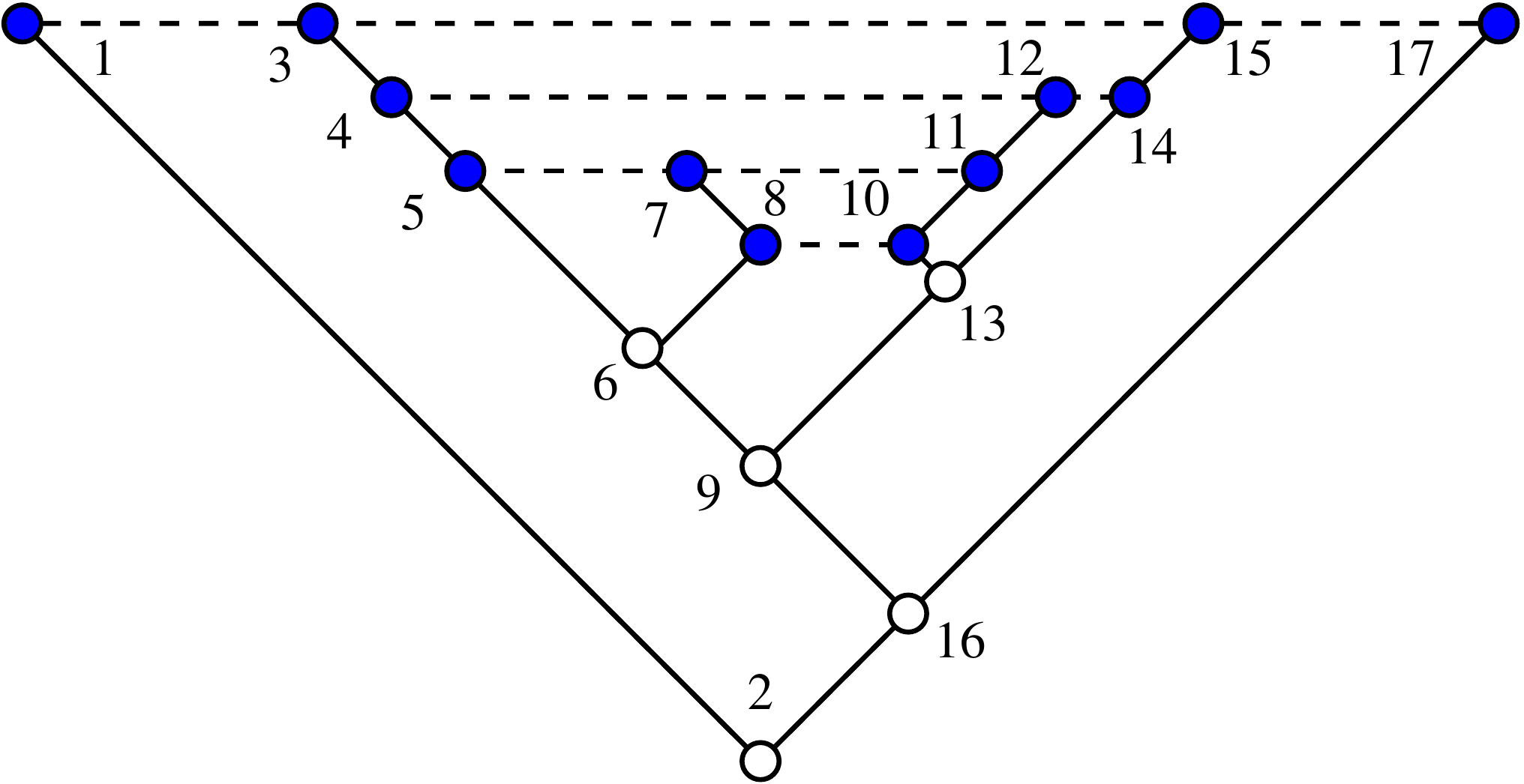}\\
(a) & 
\hspace{-0.25in}
(b) & 
\hspace{-0.25in}
(c)
\end{tabular}
\caption{\small\sf An example of a capped binary tree.
(a) The original graph.
(b) The hexagonal representation (not drawn to scale to conserve space).
(c) The corresponding capped binary tree with shaded nodes representing
cap nodes and dashed lines representing capped sets.
The nesting of all capped sets is 
$(1, 3, (4, (5, 7, (8, 10), 11), 12, 14), 15, 17)$.
We draw our binary tree upwards with the root at the bottom
to correlate better with the hexagonal drawing algorithm.}
\label{fig:cappedTree}
\end{figure}
\fi

Define a \define{capped binary tree} as a binary tree where every 
node either has two children (proper) or is assigned to a specific cap
set.
For convenience, we often refer to individual \define{cap nodes} in 
a cap set or to \define{cap node pairs} $(u,v)$ of neighboring (consecutive)
cap nodes in a cap set.

Figure~\ref{fig:cappedTree} illustrates the correlation
to the hexagons created by the 
algorithm in Section~\ref{sec_t6galg}
and provides an example of a capped binary tree,
where the nodes in the tree
represent the vertices formed in the drawing
(not the hexagon faces), the edges are precisely
the (non-horizontal) edges of the drawing, 
and each cap set is a maximal connected component of
vertices and horizontal edges of the drawing.

The \define{capped binary tree drawing problem} is to take
a capped binary tree and draw it on an integer grid such that
1) there are no edge crossings except at common endpoints;
2) each right (resp. left) edge is drawn with a slope of +1 (resp. -1); and
3) all nodes in a cap set are drawn with the same $y$-coordinate
such that they can be connected by a horizontal line segment
without crossing any other edges (except at the nodes in the set).

Before proving that we can draw capped binary trees on an $n/2 \times n/2$
grid, where $n$ is the number of nodes in the tree, we first 
present a divide-and-conquer compaction algorithm to accomplish this.
The algorithm is inspired by the layered tree drawing 
algorithm~\cite{Walker_1990}
with the additional aforementioned constraints.

Let $G_{T6G}$ be the graph derived from the touching hexagons algorithm,
{formed by taking the vertices as the
intersections of the regions and the edges as (portions of)
the sides of the regions connecting the vertices;
see Figure~\ref{fig:cappedTree}(c) where
the edges are both the solid and dashed lines.}
Let $G_T$ be the corresponding capped binary tree, formed
by removing the horizontal edges.
Our compaction algorithm, described in detail below, 
proceeds by incrementally removing
and placing a subset of the leaf nodes 
from a subtree $G_c$ initially set
to $G_T$.
For each of the leaf nodes of $G_c$ removed,
the resulting placement requires the node to be connected
to all of its child nodes (if any) in the original tree $G_T$
and also
might require adjusting the position of one of its subtrees.
Our process works by only adjusting the horizontal positions
of any node, thus preserving a subtree's vertical position.
For performance reasons, we actually delay the horizontal
shifting by merely recording the shift needed for a subtree
in its root.
The shifts are then propagated through the tree in a final 
post-processing stage.
Initially, the $x$ and $y$ positions as well as the
horizontal shift of every node is set to $0$.

Before proceeding with the details of the algorithm, we
clarify precisely those leaf nodes that are removed and
placed at each iteration.
Define the \define{active front node set} $F$ of $G_c$
as the maximum subset of leaf nodes of $G_c$, 
such that a cap node is in 
$F$ if and only if all the nodes in its cap set are also in $F$.
The initial active front is precisely those vertices at the 
upper edge of the outer triangular 
\ifArxiv
region; see Figure~\ref{fig-compExample-I}.
\else
region. For example, 
in Figure~\ref{fig:cappedTree}(c), this would be the
set $(1, 3, 15, 17)$.
\fi

\begin{enumerate}[1.]
\item For each node $v$ in the active front node set $F$ of $G_c$,
\begin{enumerate}[a)]
\item if $v$ is a leaf in $G_T$, we do nothing ($v$ remains at $(0,0)$).
\item if $v$ has one subtree in $G_T$ and if it is to the right,
extend a slope +1 line from the root of this right subtree by 
1 unit down and left to get the position of $v$.
If it instead has a left child, extend a slope -1 line, down and right.
\item 
\label{enum:stepSeparation}
if $v$ has two subtrees in $G_T$, shift the right subtree 
horizontally so that the two subtrees have 
a ``separation''\footnote{We elaborate on 
what separation entails shortly.}
of 
either distance 1 or 2, and the slope -1 (resp. +1) line from 
the root of the left (resp. right) subtree 
meet at a grid point, the
assigned point for $v$.
Record the shift used at the root of the right tree.
\end{enumerate}
\item 
\label{enum:stepHeight}
For each cap set $C$ in the front, 
set $h$ to be the maximum of the absolute values of $y$ coordinates of the cap nodes in $C$. For every cap node $v \in C$,
\begin{enumerate}[a)]
\item if $v$ is a leaf node in $G_T$, set $y(v) = -h$;
\item otherwise, by construction, node $v$ must have only one 
subtree.
If it is to the right, 
extend the slope +1 line from the root of the subtree 
till it intersects with the line $y = -h$, 
and record the coordinates of the intersection point 
as the coordinates for $v$. 
If it is to the left, extend the slope -1 line instead.
\end{enumerate}
\item Delete $F$ and its connecting edges from $G_C$, renaming the resulting tree $G_C$. If $G_c$ is not empty, go to Step 1.
\item Propagate the horizontal shifts from each node to its subtree via
a pre-order traversal starting at the root of $G_T$ to obtain a final 
integer grid position for each node.
\end{enumerate}

This algorithm yields a drawing of the $G_{T6G}$ on a grid. 
\ifArxiv
Figures~\ref{fig-compExample-I}--\ref{fig-compExample-IV} 
illustrate execution of this algorithm on the graph
from Figure~\ref{fig:cappedTree}.
Figure~\ref{fig-samples} shows some additional graphs, 
and their corresponding touching hexagons representations on a grid.
\else
In~\cite{arxiv-version}, 
we present a detailed technical execution of this algorithm 
on the graph from Figure~\ref{fig:cappedTree}.
\fi
Because the algorithm processes entire cap sets at a time,
because Step~\ref{enum:stepHeight} places all nodes in the same cap set
at the same (lowest) height, and because it only shifts nodes horizontally,
all cap nodes are drawn at the same vertical position.
Further, because
 the algorithm also only connects the tree edges using line segments
with slope $\pm 1$ and applies any horizontal shifts to the entire
subtree via the final propagation step, all tree edges
are drawn with slopes $\pm 1$.
Consequently, the drawing produced by this algorithm is a valid
capped binary tree drawing.

We also need to show that the grid size used is reasonable.
To bound the area, we must first elaborate on the 
compaction step (Step~\ref{enum:stepSeparation})
that combines
two trees such that their separation is either distance 1 or 2.
This separation is not between the two roots of the subtrees but
between the closest two nodes.
In essence, we wish to compact the two subtrees as close as possible.
\ifArxiv
Although there are several possible approaches including
some that are more straightforward but slower than ours, 
we describe below a simple method to ensure that the algorithm's
time remains linear.

For any subtree $T$, if we examine the cap nodes we see
that some cap nodes are complete in the sense that all of
the nodes in their cap set belong to $T$.
We are interested in the cap sets that are not complete.
A cap node is a \define{left cap node} of $T$ if it is 
the smallest (numbered) node in its cap set that is also in $T$ but
not the smallest node in its entire cap set; essentially, the left side
is not complete.
We define a \define{right cap node} similarly.
See Figure~\ref{fig:zipExample}(a).
Note that it is possible that a node is both a left and right cap
node.
Because of the fact that all non-cap nodes have two children
and all left (respectively, right) cap nodes have only right
(respectively, left) children if any, when joining two
subtrees the two closest nodes will necessarily be between
a left and right cap node (of the same cap set).
We therefore maintain the list of left and right cap nodes
and their respective offsets from the root.
\else
For clarity and simplicity, 
we present here a simple linear-time compaction algorithm
leading to an overall quadratic performance.
Using only doubly-connected circular linked lists,
it is not difficult to improve the performance
to an amortized constant time yielding the necessary
linear-time bound.
The details of the improved version can be found 
in~\cite{arxiv-version}.

Figure~\ref{fig:zipExample} illustrates the process.
We initially place the two subtrees to be merged
at a sufficient distance apart.
For each $y$-value in the grid, we determine the
difference between the
rightmost node in the left subtree and the
leftmost node in the right subtree (if either
subtree does not have such a node, 
we take the difference to be infinite).
Let $m$ be the smallest such difference.
We shift the right subtree horizontally $m-1$ or $m-2$ units
so that the root of the two subtrees meet at a grid
point.
Clearly this can be done in linear time.

The observation needed for the faster version is that
the minimum difference must occur at specific cap nodes,
whose neighboring cap node is in the opposite subtree.
Maintaining only this set of potential cap nodes can be
achieved via linked lists and by zipping up neighboring
cap nodes during the merge process from the root to the
leaf nodes of the subtrees.
\fi
\begin{figure}
\begin{center}
\begin{tabular}{cc}
\ifArxiv
\includegraphics[scale=.3]{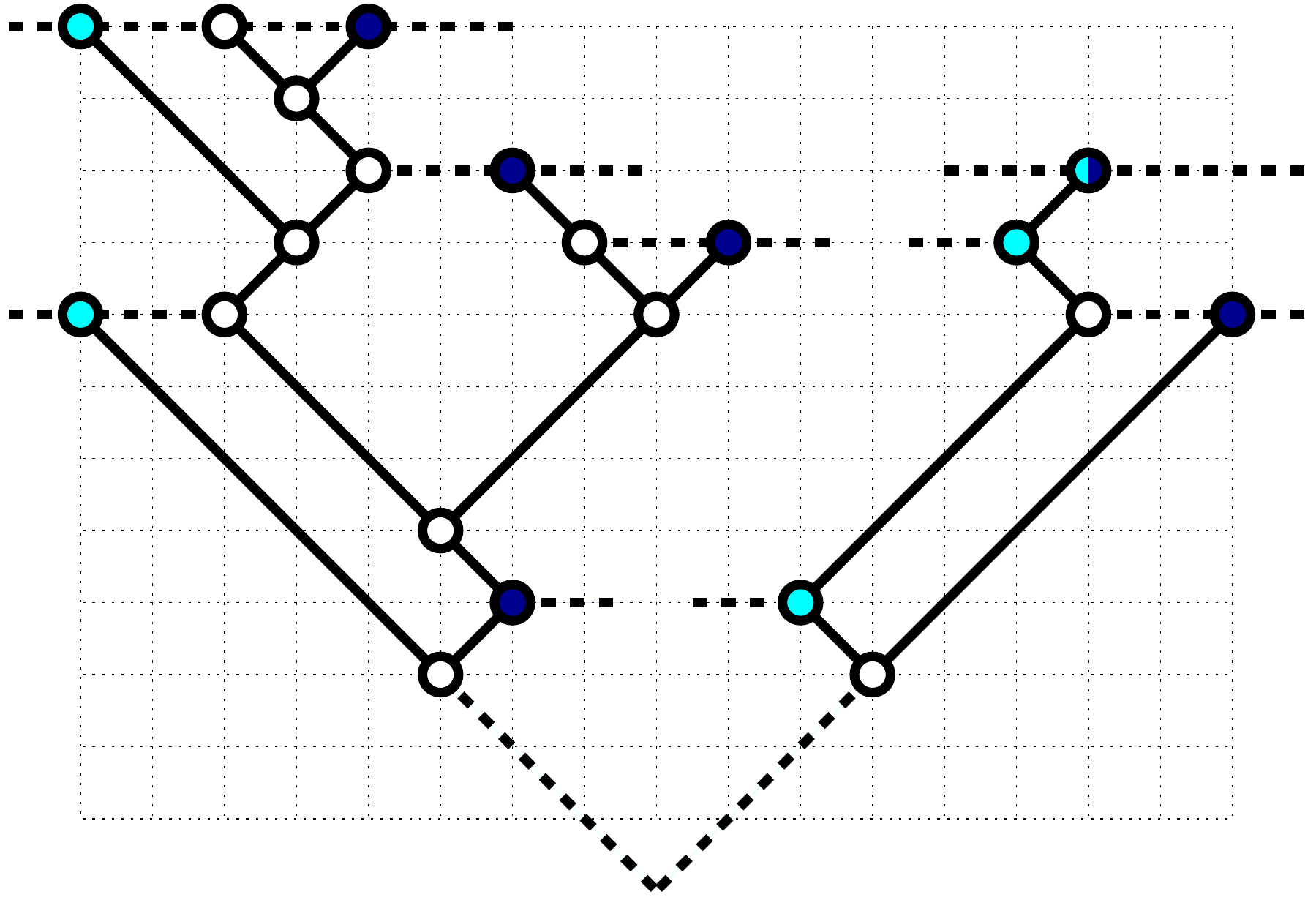} &
\includegraphics[scale=.3]{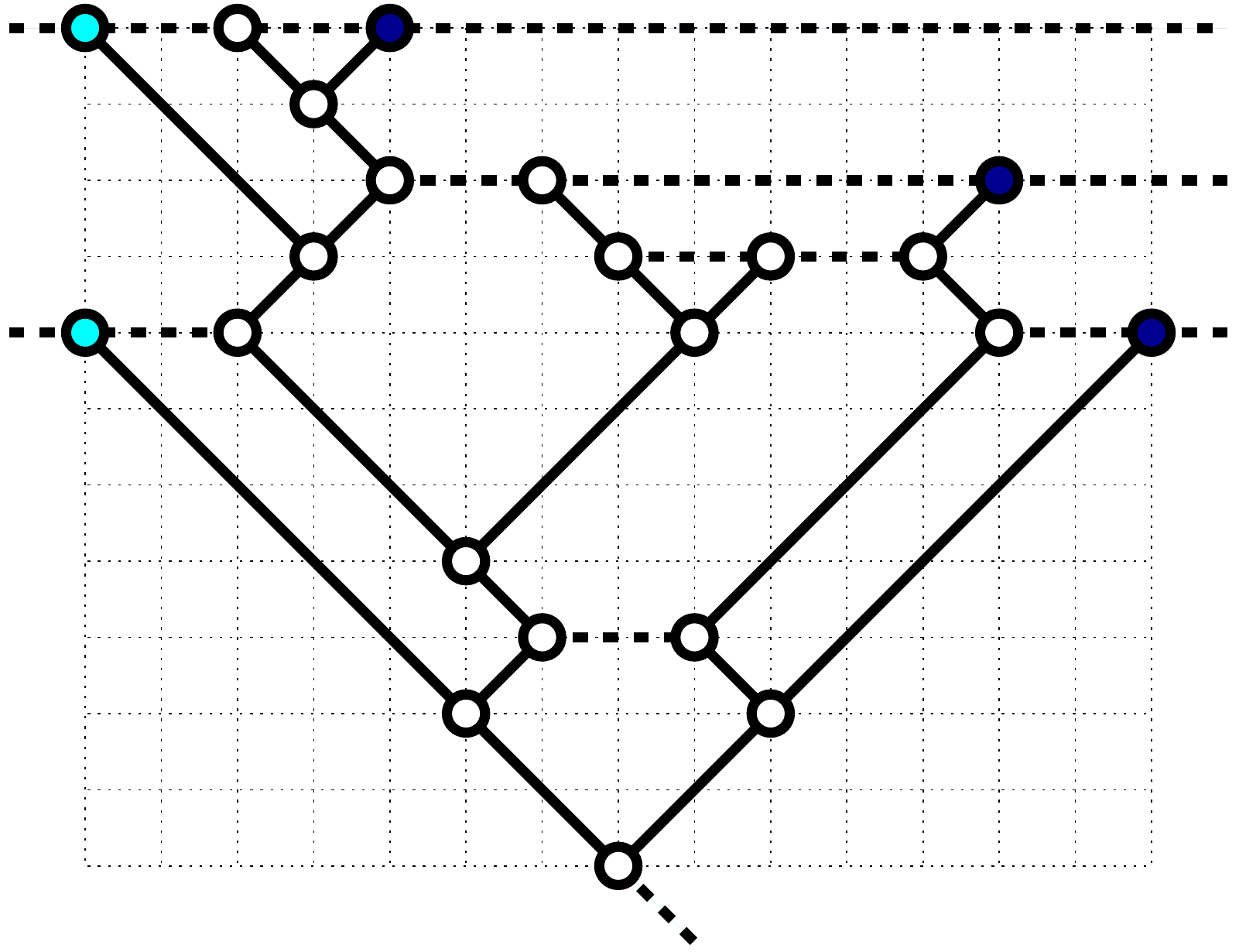} \\
\else
\includegraphics[scale=.2]{zipExample.pdf} &
\includegraphics[scale=.2]{zipResult.pdf} \\
\fi
(a) & (b)\\
\end{tabular}
\end{center}
\ifArxiv
\caption{\small\sf An example of compacting two subtrees together during
Step~\ref{enum:stepSeparation}.
(a) The two subtrees with left and right cap nodes shaded 
light and dark respectively.
Observe that the top node in the right subtree is two-tone
to reflect that it is both a left {\em and} right cap node.
(b) The resulting tree after merging with the updated left and right
cap nodes highlighted.}
\else
\caption{\small\sf An example of compacting two subtrees together during
Step~\ref{enum:stepSeparation}.
(a) The two subtrees at an initial separation, highlighting
cap nodes whose neighbors are in a different subtree.
(b) The resulting tree after merging, with new cap node pairs.}
\fi
\label{fig:zipExample}
\end{figure}

\ifArxiv
Let $T(C_L)$ represent the set of left cap nodes
and $T(C_R)$ represent the right cap nodes of subtree $T$.
We maintain both sets as a doubly-connected circularly linked list
ordered by their $y$-values.
Each cap node maintains its relative $x$-distance from the node just
before and after it in this list.
In addition, we know the offset of the first cap node from the root.
When merging two subtrees $T_1$ and $T_2$
where $T_1$ lies to the left of $T_2$,
we can find the closest distance between
them by starting with the 
first (right) cap node $r$ in the list $T_1(C_R)$, the one closest to the root 
and the first (left) cap node $l$ in the list $T_2(C_L)$.
Observe that $r$ and $l$ must be in the same cap set.
We then proceed to zip up the lists until the first one finishes,
without loss of generality assume it is $T_2$.
After each step, we can compute the offset of each cap node 
relative to its corresponding root and thus determine the
closest we can bring the two subtrees.
We can also maintain the list for the merged tree $T$ by setting
$T(C_L) = T_1(C_L)$ and $T(C_R) = T_2(C_R)$ and then 
merging the remaining elements from $T_1(C_R)$ into $T(C_R)$.
The same applies if $T_1$ finishes first (has the shorter list).
Figure~\ref{fig:zipExample}(b) shows the resulting merge of two
subtrees.
\fi

\begin{lemma}
\label{lemma:capped}
Given any capped binary tree $T$, we can compute in linear time
a capped binary tree drawing of $T$ on an $(n-1) \times (n-1)/2$ grid.
\end{lemma}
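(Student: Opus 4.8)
The plan is to establish three things in turn: that the algorithm's output is a valid capped binary tree drawing, that it fits in an $(n-1)\times (n-1)/2$ grid, and that it runs in linear time. Validity is essentially settled by the discussion preceding the statement: every tree edge is introduced with slope $\pm 1$, the only non-rigid operation is a horizontal shift of an entire subtree (which preserves slopes and the relative layout), disjoint subtrees are kept in separated $x$-ranges so no two edges cross, and Step~\ref{enum:stepHeight} places every cap set on a single horizontal line. So the substance of the proof is the grid size and the time bound. I would first prove the width bound, then derive the height bound from it almost for free, and finally argue linearity.

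For the \textbf{width bound} $W(T)\le n-1$, I would induct on $|T|=n$. A leaf has width $0$. A node with a single child is placed one unit horizontally beyond its child's root, so $W(T)\le W(T_c)+1=n-1$. The interesting case is a branching node (a proper node, or a cap set acting through its two end-nodes) joining a left subtree $T_1$ and a right subtree $T_2$. By the compaction step the \emph{minimum} horizontal gap between the two subtrees over all common heights is either $1$ or $2$; evaluating the bounding box at the height where this minimum is attained shows $\min_x(T_2)-\max_x(T_1)$ is at most that gap, so $W(T)\le W(T_1)+W(T_2)+2\le (n_1-1)+(n_2-1)+2=n-1$, where $n=1+n_1+n_2$, the extra unit accounting for the branch node; the newly placed parent lies at an $x$ strictly between the two roots and so does not enlarge the range. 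Handling cap sets requires only observing that a cap set contributes a left subtree (off its first node) and a right subtree (off its last node), so it fits the same bookkeeping.

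The \textbf{height bound} is the crux, and the key observation is structural: for a drawing whose internal nodes are all proper, the entire left boundary is a single straight segment of slope $-1$ and the entire right boundary a single straight segment of slope $+1$. Indeed, following left children from the root reaches the inorder-leftmost leaf, every edge on that path is a left edge, hence has slope $-1$, and consecutive collinear slope $-1$ edges merge into one segment; symmetrically on the right. These two segments meet at the root, so if the extreme leaves sit at $x_{\min}$ and $x_{\max}$ at height $0$ and the root at $(x_r,y_r)$, then $-y_r=x_r-x_{\min}=x_{\max}-x_r$, giving height $=-y_r=(x_{\max}-x_{\min})/2=W(T)/2\le (n-1)/2$. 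I expect the main obstacle to be making this boundary argument airtight in the presence of cap sets: a cap end can have only one child, so the leftmost descent need not consist solely of left edges, and Step~\ref{enum:stepHeight} flattens each cap set onto its lowest level, which can introduce horizontal jogs or push an end-node off the ideal diagonal. The plan there is to show that flattening only moves boundary nodes downward and never outward, so the inequality height $\le W(T)/2$ survives. An edge count confirms the bound is the right target: summing out-degrees gives $n-1=2p+2q$, where $p$ is the number of proper nodes and $q$ the number of non-outermost cap sets, so the number of branch points is exactly $(n-1)/2$.

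Finally, for \textbf{linear time}, the embedding and canonical labelling are linear by the cited results, and the drawing loop does constant work per node except for the subtree merges of Step~\ref{enum:stepSeparation}. As sketched around Figure~\ref{fig:zipExample}, I would maintain for each subtree doubly-linked circular lists of its incomplete left and right cap nodes ordered by height; merging two subtrees zips the right list of the left subtree against the left list of the right subtree, each zip step either completing (and removing) a cap node or terminating the merge. Since a cap node is completed at most once, the total zipping work is $O(n)$, so each merge costs amortized constant time, and a single pre-order traversal then propagates the recorded horizontal shifts. Combining the width and height bounds with this running time yields the lemma.
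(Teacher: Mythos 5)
Your overall architecture (validity, then width, then height $=$ width$/2$ via straight boundaries, then linearity via zipping cap-node lists) parallels the paper's, and your height and running-time arguments are essentially fine. The genuine gap is in the width bound, and it sits exactly where the paper spends its effort. Your single-child case asserts $W(T)\le W(T_c)+1$ because such a node ``is placed one unit horizontally beyond its child's root.'' That describes only the initial placement; Step~\ref{enum:stepHeight} then slides every cap node along the slope-$\pm 1$ line through its unique child down to the common level $y=-h$ of its cap set, and $h$ is a maximum taken over the \emph{entire} cap set, including portions of the tree that lie outside the subtree you are inducting on. So a cap node can end up horizontally displaced from its child by the full descent to the cap level, not by $1$. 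Concretely: take a cap set $(a,b)$ where $a$'s left subtree is deep (forcing $h$ large) and $b$'s right subtree is a single leaf at level $0$; then $b$ is drawn $h$ units to the left of its child, and the two-node subtree rooted at $b$ has width $h \gg n-1 = 1$. Your inductive hypothesis is therefore false for subtrees rooted at flattened cap nodes. The global bound survives only because these overhangs point \emph{inward} (a first cap node slides right, a last one slides left) and are later absorbed when the lca merges the two sides with overlapping bounding boxes --- but capturing that requires an invariant about the horizontal gaps between \emph{consecutive cap nodes}, not about per-subtree bounding boxes. That is precisely the paper's inductive claim: the gap of a cap pair $(u,v)$ is at most twice the number of cap pairs in the inorder traversal between them, proved by showing $\ell \le \ell' + 2$ as successive pairs are zipped, and then combined with the count of exactly $(n-1)/2$ cap pairs (via the $3$-regularity of tree edges plus capping edges). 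Your remark that cap sets ``fit the same bookkeeping'' glosses over exactly this point.

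Two smaller remarks. First, you call the height bound ``the crux,'' but it is the easy half, in the paper it is dispatched in one sentence: your straight-boundary observation does survive cap flattening, since flattening slides each boundary cap node along the very slope line through its child (preserving collinearity of the all-left-edge and all-right-edge extreme paths), and the outermost cap set consists entirely of leaves fixed at level $0$; so height $=$ width$/2$ is safe once the width is controlled. Second, your per-node charging scheme ($0$ per leaf, $1$ per cap end, $2$ per proper node) does sum to exactly $n-1$, so you are aiming at the right number; what is missing is the justification of the charge of $1$, which is the hard part and is where the paper's cap-pair argument would have to be reproduced. The linear-time portion of your proposal (amortized zipping of incomplete cap-node lists, then one shift-propagation pass) matches the paper's and is correct.
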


\begin{proof}
\ifArxiv
We have already described the linear-time algorithm that
produces a valid compact drawing on an integer grid.
\else
The linear-time algorithm comes directly from the above
discussion and the improved constant amortized time 
compaction step.
\fi
However, it still remains to prove that the resulting drawing
is sufficiently compact.
We do this by inductively analyzing the separation between neighboring 
cap nodes, which are ``joined'' during the process described above.
This proof is reminiscent of the one given by 
Kant~\cite{kant-92}.

For every cap node pair $(u,v)$, consecutive nodes in a cap set, 
let its \define{interior cap set} $C_i$ be the cap
set (if one exists) whose first node is the next cap node 
in the inorder traversal of $T$ from $u$.
Note that from the definition of the nesting of cap sets, 
the last node
in $C_i$ would be the last cap node before $v$.
If no such set exists, then let $C_i$ be $\{\mbox{lca}(u,v)\}$,
where  $\mbox{lca(u,v)}$  represents the least common ancestor of
$u$ and $v$.
If $u$ has a child node, it must be a left child and if
$v$ has a child node, it must be a right child.
Since any subtree necessarily has at least one cap node
(a leaf of that subtree),
this lca is also the only node in the inorder
traversal between $u$ and $v$.
We refer to $u$ and $v$ as the \define{exterior cap nodes} of $C_i$.
Observe that the interior cap set will have its $y$-coordinate value 
closer to the root in the final drawing.
For example, in Figure~\ref{fig:cappedTree}(c), the cap node pair $(4,12)$
has the interior cap set $\{5,7,11\}$, whereas the cap node pair $(12,14)$
has as the interior cap set the lca $\{13\}$.

Our proof uses the following inductive claim.
After every iteration of our algorithm, for any subtree $T'$ and any cap node pair $(u,v)$ 
in $T'$ that is part of a cap set whose exterior cap nodes are not also in $T'$
or is the outermost cap set,
the (horizontal) distance between $(u,v)$ is no more than twice the 
number of cap node pairs
in the inorder traversal of $T'$, inclusive of the pair $(u,v)$.

If this claim holds, then the drawing is on an $(n-1) \times (n-1)/2$ 
grid because the
final drawing is a single tree with one row of cap nodes at the top 
whose width cannot exceed twice the number of cap node pairs in the tree.
To bound the number of cap node pairs, notice that the graph formed
by the binary tree edges combined with the horizontal (capping) edges forms a
3-regular graph, excluding the root and leftmost and rightmost vertices
which have degree two.
Since this graph has $(3n-3)/2$ edges and $n-1$ of the edges are tree edges,
that leaves exactly $(n-1)/2$ horizontal edges.
Each horizontal edge corresponds to a unique cap node pair.
The height follows from the $\pm 1$ slope of the non-horizontal edges.

Initially, every node is in its own subtree so the claim holds.
Inspecting the algorithm reveals that the only place where the claim could change is
in Step~\ref{enum:stepSeparation} where two subtrees $T_1$ and $T_2$ are merged.
In addition, since the trees are simply shifted to merge, the only possible change
is due to the introduction of new cap node pairs, a cap node from each subtree is aligned
with its neighbor in the other tree.
In fact, since the merging process zips nested cap sets in succession,
we are only concerned with the final width of the last cap pair merged.
We again prove this by induction on the zipping process.

We claim that the width of the cap node pair $(u,v)$ merged is no more than
twice the number of cap node pairs in its inorder traversal from $u$ to $v$.
Let $(u,v)$ be the first cap node pair merged.
Since the interior cap set of $(u,v)$ is simply $r$, which is the only
node in the inorder traversal between $u$ and $v$,
the resulting width at this stage is at most 2; 
see Figure~\ref{fig:mergeTree}(a).
Thus, our claim holds after the first merged cap node pair.

We now progress inductively.
Let $(u,v)$ be the next cap node pair merged with width $\ell$,
$(u',v')$ be the previous pair, and
$C$ be the interior cap set of $(u,v)$.
Notice that $(u',v') \in C$.
By induction, we know that the entire width $\ell'$ of $C$ is no more than 
twice the 
number of cap node pairs 
in the inorder traversal from the first to last cap nodes in $C$.
In addition, since $u$ has no right subtree and $v$ has no left subtree,
the next cap node in the inorder traversal is the first cap node in $C$
and the last in the traversal is the last cap node in $C$.
Therefore, 
we know that the number of cap node pairs in the inorder traversal 
from $u$ to $v$ is the same as the number for $C$ plus one, the one 
for $(u,v)$.
Therefore, we need only to prove that $\ell \leq \ell' + 2$.

\begin{figure}
\begin{center}
\begin{tabular}{ccc}
\input{mergeLCAExample.tex} &
\input{mergeExample.tex} &
\includegraphics[width=.25\textwidth]{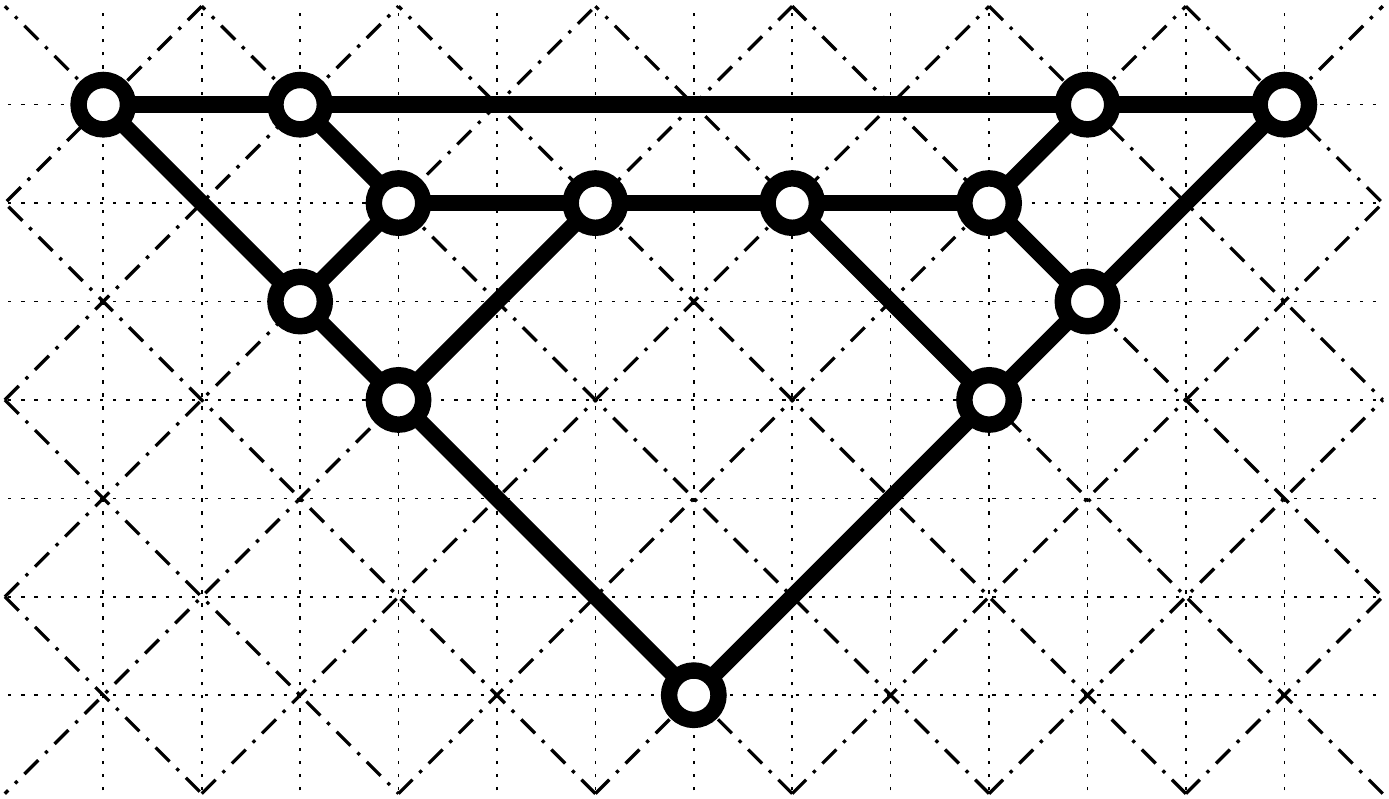}\\
(a) & (b) & (c)
\end{tabular}
\caption{\small\sf Example of the merging of two subtrees during successive cap
node pairs for (a) a single lca node and $(u,v)$ and
(b) $(u',v')$ and $(u,v)$.
(c) A simple drawing of a capped binary tree highlighting both 
the normal grid (horizontal/vertical lines) 
and the rotated, space-efficient, grid (diagonal lines).
\label{fig:mergeTree}}
\end{center}
\end{figure}

Let $a$ and $b$ be the first and last cap node in $C$; see 
Figure~\ref{fig:mergeTree}(b).
By the definition of a capped binary tree, we know that $a$ and $b$ each 
have only one child, a left and a right respectively.
In addition, by Step~\ref{enum:stepHeight} we know that one of the two child nodes
is only one unit above its parent.
Without loss of generality assume it is the left child of $a$.
This means that the node is also one unit to the left of $a$.
Node $u$ is a descendant of this left child but from the definition
of the capped binary tree, $u$ can be found by traversing successive right children only.
Therefore, the path from this left child to $u$ follows a straight line of slope $+1$.
This follows parallel with the line from $b$ through its right child.
The (horizontal) distance from $u$ to this line is exactly $\ell'+2$.
Since the path from this right child to $v$ follows left children only (if any),
the distance from $u$ to $v$ is $\ell \leq \ell'+2$.
This completes our proof.
\qed
\end{proof}

For a clearer understanding and better symmetry 
with the construction technique used in Section~\ref{t6g}, we used
edges with slopes $\pm 1$ and $0$.
\ifArxiv
We can improve the area bound slightly using a rotated drawing
yielding the following corollary:
\else
As Figure~\ref{fig:mergeTree}(c) illustrates, 
by using a grid that is rotated $45^\circ$ producing
tree edges that are drawn rectilinear and capped edges drawn
with slope $-1$, we can improve the area bound slightly.
\fi
\begin{corollary}
\label{cor:capped}
Given any capped binary tree $T$, we can compute in linear time
a (rotated) capped binary tree drawing of $T$ on an $(n-1)/2 \times (n-1)/2$ 
grid.
\end{corollary}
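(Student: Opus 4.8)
The plan is to obtain the rotated drawing directly from the output of Lemma~\ref{lemma:capped} by a single affine change of coordinates, rather than re-running the compaction. Recall that the drawing guaranteed by Lemma~\ref{lemma:capped} places all tree edges with slope $\pm 1$ and all capping edges with slope $0$, inside a box of width $n-1$ and height $(n-1)/2$. First I would establish a parity invariant: every tree edge changes $x+y$ by $0$ or $\pm 2$ (a slope $-1$ edge keeps $x+y$ fixed, a slope $+1$ edge increases it by $2$), and every capping edge is horizontal of even length, so it too changes $x+y$ by an even amount. Since the drawing is connected, all vertices therefore share the same parity of $x+y$; after translating by at most one unit we may assume $x+y$ is even at every vertex. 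Consequently the map $(x,y)\mapsto(u,v)=\bigl(\frac{x+y}{2},\frac{y-x}{2}\bigr)$ sends every vertex to an integer point, so the rotated drawing genuinely lives on an integer grid.

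Next I would verify that this map produces a valid rotated drawing. Being affine it preserves collinearity and sends non-crossing segments to non-crossing segments, so the combinatorial drawing is unchanged. A right-child edge $(x,y)\to(x+1,y+1)$ becomes the horizontal unit step $u\to u+1$, a left-child edge $(x,y)\to(x-1,y+1)$ becomes the vertical unit step $v\to v+1$, and a capping edge $(x,y)\to(x+2,y)$ becomes a segment of slope $-1$; hence tree edges are rectilinear and every cap set lies on a common slope $-1$ line, which are exactly the conditions defining a rotated capped binary tree drawing. Linear running time is then immediate, since we merely apply this formula to the $O(n)$ vertices returned by Lemma~\ref{lemma:capped}.

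The remaining---and main---task is the area bound: I must show that $u$ and $v$ each range over an interval of length at most $(n-1)/2$, i.e.\ that the diagonal extents $\max(x+y)-\min(x+y)$ and $\max(x-y)-\min(x-y)$ are each at most $n-1$. The naive estimate, which bounds each diagonal extent by the sum of the width and the height $\frac{3}{2}(n-1)$, is too weak, so here I would exploit the slope structure rather than the axis-aligned bounding box. The key point is that $x+y$ is constant along every left (slope $-1$) spine and $x-y$ is constant along every right (slope $+1$) spine, so the extreme values of $x\pm y$ are attained on the capping row bounding the drawing from above. Reusing the inductive cap-pair analysis of Lemma~\ref{lemma:capped}---where each of the $(n-1)/2$ cap node pairs contributes a horizontal distance of at most $2$, equivalently a $u$-distance of at most $1$ after rotation---I would argue that the cap row spans at most $(n-1)/2$ in the $u$ direction, and a symmetric argument, interchanging the roles of the two slopes, bounds the $v$ direction by $(n-1)/2$. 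As a sanity check one verifies on the extremal ``full'' drawing, in which the left and right spines run from the root to opposite top corners, that the rotated box is exactly $(n-1)/2\times(n-1)/2$, and that every capped binary tree on $n$ nodes only shrinks these extents; see Figure~\ref{fig:mergeTree}(c).

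The step I expect to be the real obstacle is precisely this extent bound: transferring the $x$-width argument of Lemma~\ref{lemma:capped} into the rotated metric while simultaneously controlling the $v$-direction, since the box that is tight in the original axes is not obviously tight after a $45^\circ$ rotation. The parity invariant is what lets the rotation land on the integer grid at all, and the spine-invariance of $x\pm y$ is what prevents the diagonal extents from growing beyond $n-1$; combining these two facts with the cap-pair count is the crux, and it yields the claimed $(n-1)/2\times(n-1)/2$ grid in linear time.
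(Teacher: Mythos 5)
Your first two steps are exactly the paper's proof: the paper also obtains the corollary by rotating the drawing of Lemma~\ref{lemma:capped} by $45^\circ$, noting that tree edges become rectilinear while the formerly horizontal capping edges acquire slope $-1$, and using the fact that every vertex is joined to the root by a path of slope-$\pm 1$ tree edges to conclude that all vertices occupy a single parity class of the original grid (your parity invariant; note edges need not be unit length, but the change in $x+y$ along any slope-$\pm 1$ edge is still even, and the tree alone already spans all vertices, so you do not even need the separate claim about capping edges).

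Where you diverge is the area bound, and there your plan is both more laborious than necessary and genuinely incomplete. You want to show the extremes of $x\pm y$ are attained on the outermost cap row and then re-run the cap-pair induction, but (i) the minimum of $x+y$ is attained at the root, not on that row; (ii) the claim that the maximum of $x+y$ over all leaves occurs on the \emph{outermost} cap set, rather than on some nested one, requires the nesting property (interior cap sets lie horizontally between and vertically below their exterior cap pair), which you never establish; and (iii) even granting (ii), bounding the cap row's $u$-span by $(n-1)/2$ only bounds the drawing's $u$-extent if you also know the left end of that row has the same $u$-coordinate as the root, i.e., that the leftmost leaf is reached by a pure slope-$-1$ spine. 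None of these sub-claims is false, but they are the actual content you would owe, and you flag this very step as the expected obstacle without resolving it. The obstacle dissolves if you let your root-connectivity observation do double duty: placing the root at the origin, every root-to-vertex path consists of upward slope-$\pm 1$ edges, so every vertex satisfies $|x|\le y$, i.e.\ $u=(x+y)/2\ge 0$ and $v=(y-x)/2\ge 0$; since $u+v=y\le (n-1)/2$ by the height bound of Lemma~\ref{lemma:capped}, both rotated coordinates lie in $[0,(n-1)/2]$ outright. This one-line cone-plus-height argument is, read charitably, precisely what the paper's remark that ``every vertex is connected to the root through a sequence of binary tree edges'' is doing (together with Figure~\ref{fig:mergeTree}(c)), and it replaces your entire third step.
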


\ifArxiv
\begin{proof}
The trick is to rotate the grid $45^\circ$.
The result is that the tree edges are drawn as horizontal and vertical lines
while the ``previously horizontal'' capped edges are drawn with slope $-1$.
As Figure~\ref{fig:mergeTree}(c) illustrates the only places where the 
vertices can lie from the initial drawing are the same as the overlaid 
rotated grid of dimension $(n-1)/2 \times (n-1)/2$.
This can be further shown by observing that every vertex is connected to 
the root through a sequence of binary tree edges (of slope $\pm 1$ 
in the original grid).  Thus, the original grid points lying
in between the rotated grid squares cannot contain any vertices
from the drawing.
Thus, we have the same drawing on a slightly more compact grid.
\qed
\end{proof}
\fi

Since the initial construction step does not
need to create the hexagons explicitly,
that step can be used simply to determine
the combinatorial representation of the capped binary tree.
This prevents any issues with numerical precision and representation.
Combining Lemmas~\ref{lemma:t6g} and~\ref{lemma:capped} yields our first proof
for Theorem~\ref{thm:t6g}.


\section{Another Hexagonal Representation using $O(n)\times O(n)$ Area}
\label{sec_alternate}

In this section, we present an alternative approach
to proving Theorem~\ref{thm:t6g}.
This approach is based on Kant's algorithm for hexagonal grid drawings of 3-connected, 3-regular planar graphs~\cite{kant-92}.
Although the modification needed is direct, we feel that our previous
approach is
a more intuitive and constructive technique that yields better fundamental
insight into the nature of the problem.

In Kant's algorithm the drawing is obtained by looking at the dual graph and processing its vertices in the canonical order. In the final drawing, however, there are two non-convex faces, separated by an edge not drawn as a straight-line segment. We address these problems by adding some extra vertices in a pre-processing step. Once the dual of this augmented graph is embedded, the faces corresponding to the extra vertices can be removed to yield the desired $O(n)\times O(n)$ grid drawing. 

Let $H=(V,E)$ be a 3-connected, 3-regular planar graph. 
Note that the dual $D(H)$ is fully triangulated, as each face in the dual
corresponds to exactly one vertex in $H$. So, for $f$ faces in $H$, 
we have $f$ vertices in $D(H)$. 
We first compute a canonical ordering on the vertices of $D(H)$ as defined
by de Fraysseix {\em et al.}~\cite{FMR04}.
Let $v_1, \dots, v_f$ be the vertices in $D(H)$ in this canonical order.

Kant's algorithm now constructs a drawing for $H$ on the hexagonal grid
such that all edges but 
one have slopes $0^\circ$, $60^\circ$ or $-60^\circ$,
with the one edge with bends lying on the outer face. 
The typical structure of those drawings is shown in Figure~\ref{fig:kant}(a).
Although we focus our description using the hexagonal grid, 
to place on the rectilinear grid, the corresponding slopes are
$0^\circ$, $90^\circ$ and $-45^\circ$.


\begin{figure}[tb]
\begin{center}
\ifShowFigures
\begin{tabular}{cc}
\ifArxiv
\includegraphics[width=.3\textwidth]{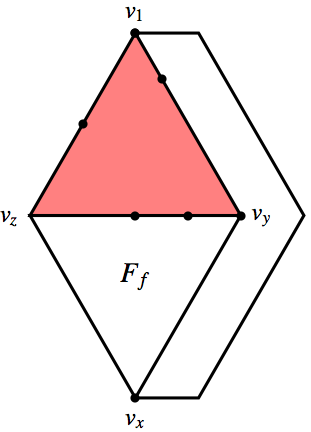} &
\includegraphics[width=.4\textwidth]{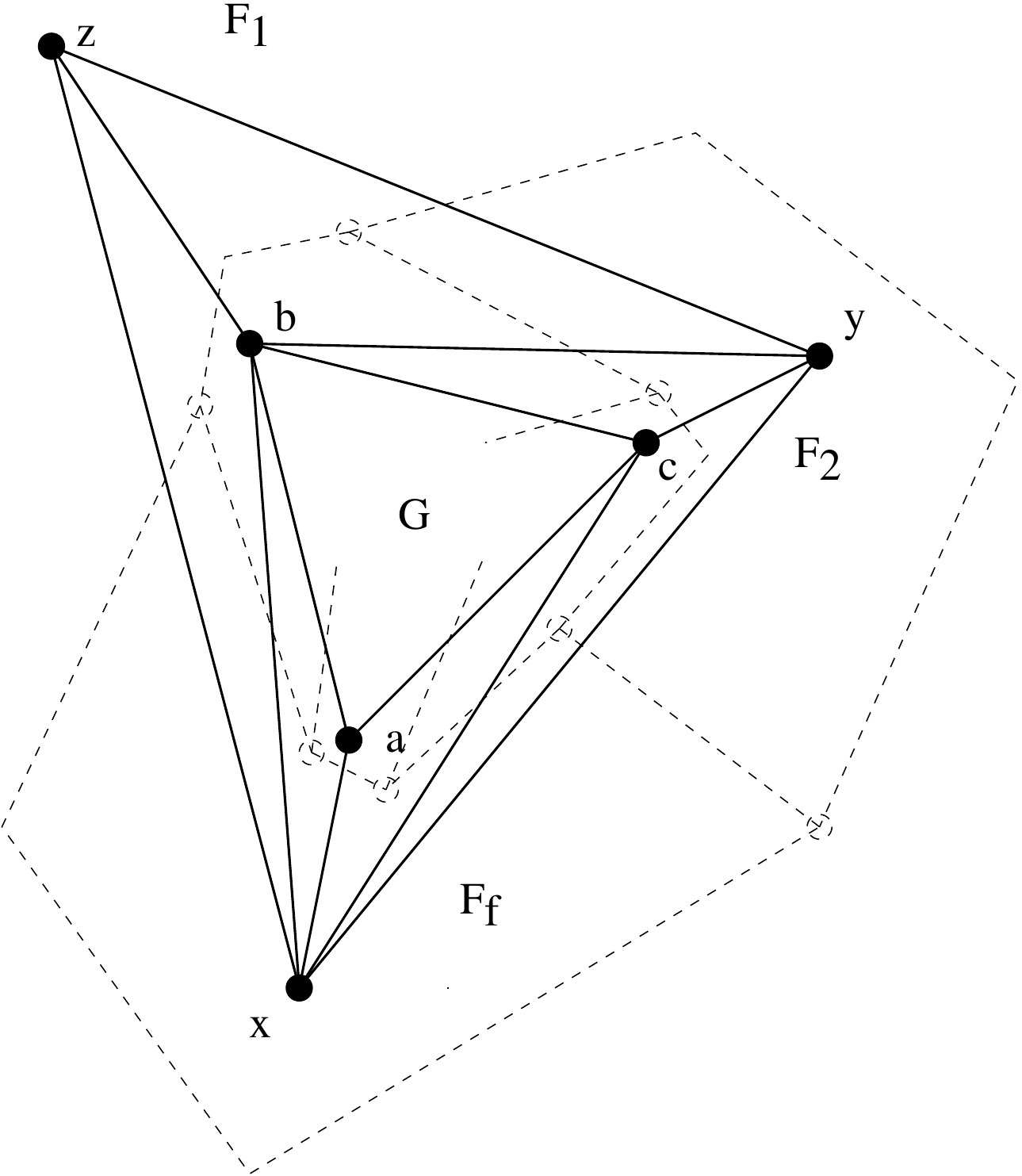}\\
\else
\includegraphics[width=.2\textwidth]{kant.png} &
\includegraphics[width=.3\textwidth]{Gplus3}\\
\fi
(a) & (b)
\end{tabular}
\else
{\bf This figure skipped as well!!}
\fi
\caption{\small\sf (a) Polygonal structure obtained from Kant's algorithm. 
(b) Graph $G$ augmented by vertices $x$, $y$ and $z$ 
together with its dual which serves as the input graph 
for Kant's algorithm.\label{fig:kant}}
\end{center}
\end{figure}

The algorithm incrementally constructs the drawing by adding the faces of $H$ 
in reverse order of the canonical order of the corresponding vertices in $D(H)$.
We let $w_i$ be the vertices of $H$.
Let face $F_i$ correspond to vertex $v_i$ in $D(H)$. The algorithm starts with a
triangular region for the face $F_f$ that corresponds to vertex $v_f$.
The vertex $w_x$ that is adjacent to $F_f$, $F_1$ and $F_2$ is placed at the
bottom.
Let $w_y$ and $w_z$ be the neighbors of $w_x$ in $F_f$. These three vertices 
form the corners of the first face $F_f$. 
$(w_x,w_z)$ and $(w_x,w_y)$ are drawn upward with equal lengths and slopes 
$-\sqrt{3}$ and $\sqrt{3}$, respectively.
All the edges on the path between $w_y$ and $w_z$ along $F_f$
are drawn horizontally between the two vertices. 
From this first triangle, all other faces are added in reverse canonical 
order to the upper boundary of the drawing region. 
If a face is completed by only one vertex $w_i$, this vertex is placed 
appropriately above the upper boundary such that it can be connected by 
two edges with slopes $-\sqrt{3}$ and $\sqrt{3}$, respectively. 
If the face is completed by a path, then the two end segments of the path 
have slopes $-\sqrt{3}$ and $\sqrt{3}$, while the other edges are horizontal.
The construction ends when $w_1$ is inserted, corresponding to the outer
face $F_1$. Note that there is an edge between $w_1$ and $w_x$, which is drawn
using some bends. 
This edge is adjacent to the faces $F_1$ (the outer face) and $F_2$.

From this construction, we can observe that the angles at faces 
$F_f, \dots, F_3$ have size $\leq 180^\circ$ as the first two edges
do not enter the vertex from above, and the last edge leaves the vertex upwards.
Hence, we have the following result. 

\begin{lemma}
The faces $F_f,\dots,F_3$ are convex, and as the slopes of the 
edges are $\pm\!\sqrt{3}$ or $0$,
they are drawn with at most six sides.
\end{lemma}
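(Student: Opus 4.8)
The plan is to show two things about the faces $F_f,\dots,F_3$: first that each such face is convex, and second that each is bounded by at most six line segments. Both facts should follow by tracking the invariant maintained by Kant's incremental construction, namely that at the moment a face is completed (i.e. when the last vertex or path finishing it is added above the current upper boundary), its two "entering" edges have slopes $-\sqrt{3}$ and $\sqrt{3}$ and all intermediate edges along its lower boundary are horizontal.

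First I would establish convexity. The key observation, already flagged in the paragraph preceding the statement, is that for each face $F_i$ with $3 \le i \le f$, the first two edges incident to the completing vertex do not enter it from above, and the final edge leaves it upward. I would make this precise by walking around the boundary of a fixed face $F_i$ in the order the construction builds it and checking that the interior angle at every vertex is at most $180^\circ$. Because the construction only ever adds material to the upper boundary and the completing edges have the prescribed steep slopes $\pm\sqrt{3}$, no reflex angle can be created at any vertex of $F_i$; the lower boundary consists of horizontal segments joined to the two steep side edges, and the upper portion is completed monotonically. Convexity of a closed polygon is equivalent to every interior angle being at most $180^\circ$, so this suffices.

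Next I would bound the number of sides. Since every edge of the drawing has slope in the three-element set $\{-\sqrt{3}, 0, \sqrt{3}\}$ (the bent edge $(w_1,w_x)$ is the sole exception, and it is incident only to $F_1$ and $F_2$, hence not to any of $F_f,\dots,F_3$), the boundary of a convex face $F_i$ cannot contain two consecutive collinear segments of the same slope without them merging into one side. A convex polygon whose edges use only three distinct slopes has at most two sides of each slope — one on each "side" of the convex region for a given direction — giving at most $2 \times 3 = 6$ sides. I would phrase this as: as we traverse the convex boundary the edge directions turn monotonically, so each of the three slopes is realized by at most two maximal edges, yielding the bound of six.

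The main obstacle is making the convexity argument airtight rather than merely plausible from the figure. The subtlety is that a face $F_i$ is not created all at once: its lower boundary may have been laid down much earlier (as the upper boundary at some intermediate stage), and only later is it "closed off" from above. I would therefore need to argue carefully that the horizontal lower edges together with the two $\pm\sqrt{3}$ side edges and the completing top portion never produce a reflex vertex — in particular at the two junctions where a horizontal segment meets a steep segment, and at the apex vertices of the completing path. This requires invoking the canonical ordering property (criterion~2) to guarantee that the vertices completing $F_i$ attach to a contiguous interval of the current upper boundary, so that the face is filled in a single monotone sweep with no pockets. Once that monotonicity is in hand, the angle bound and hence convexity follow, and the side-count bound is then immediate from the three-slope restriction.
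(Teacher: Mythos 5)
Your proposal is correct and follows essentially the same route as the paper: the paper's entire argument is the one-sentence observation preceding the lemma that every angle of $F_f,\dots,F_3$ is at most $180^\circ$ because at each vertex the first two edges do not enter from above while the last edge leaves upward, with the six-side bound treated as immediate from convexity plus the three allowed slopes $\pm\sqrt{3}$ and $0$. Your elaborations --- the monotone-turning argument giving at most two maximal sides per slope, and the care about faces being closed off incrementally via the canonical ordering --- only fill in details the paper leaves implicit.
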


This property is exactly what we are aiming for, as the vertices of our 
input graph $G$ should be represented by 
convex regions of at most six sides.
Unfortunately, Kant's algorithm creates two non-convex faces $F_1$ and $F_2$
separated by an edge which is not drawn as a line segment.
Furthermore, the face $F_f$ is drawn as large as all the remaining faces $F_3,\dots,F_{f-1}$ together.

Kant gave an area estimate for the result of his algorithm
which is the same for both hexagonal and rectilinear grids.
A corollary of Kant's algorithm is the following:

\begin{corollary}
For a given 3-connected, 3-regular planar graph $H$ of $n$ vertices, 
$H - w_x$ can be drawn
within an area of $n/2 -1 \times n/2 -1$.
\end{corollary}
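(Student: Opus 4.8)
The plan is to read the bound directly off Kant's area estimate, which the preceding paragraph entitles us to invoke, after expressing the relevant quantities in terms of $n$ and then subtracting the contribution of the bottom vertex $w_x$ together with the single bent edge. The substantive content is therefore not a new drawing argument but two bookkeeping steps: translating Kant's count, which is natural in terms of the number of faces processed by the construction, into a bound in $n$, and accounting for the deletion of $w_x$.

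First I would pin down the number of incremental steps. Since $H$ is $3$-regular on $n$ vertices, summing degrees gives $|E| = 3n/2$, and Euler's formula yields $f = 2 - n + |E| = n/2 + 2$ faces. These $f$ faces are exactly the $f$ vertices of the triangulated dual $D(H)$ that are processed in canonical order, so Kant's construction consists of $f = n/2 + 2$ face-insertion steps, beginning with the base triangle $F_f$ anchored at $w_x$ and ending with the outer face $F_1$.

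Next I would track the bounding box across these steps. The drawing is built bottom-up from the horizontal base path of $F_f$, with $w_x$ sitting one level below as the apex of that base triangle; each later face $F_i$ ($i = f-1,\dots,2$) attaches to the current upper boundary and raises the topmost level by at most one unit, while $F_1$ closes the outer boundary. Counting distinct levels together with the horizontal extent contributed by the slope-$0$ segments and the telescoping slope-$\pm\sqrt{3}$ end edges, Kant's estimate places the whole drawing of $H$ within a grid of side $n/2$ in each direction. The only features forcing the full width and the lowest level are the base triangle $F_f$ — which, as noted, is as large as all of $F_3,\dots,F_{f-1}$ together — and the irregular bent edge between $F_1$ and $F_2$, both of which meet precisely at $w_x$.

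Deleting $w_x$ therefore removes the apex that anchors the full-width base and eliminates the bent edge, shaving one unit from each dimension; the remaining faces repack into an $(n/2-1)\times(n/2-1)$ grid, and the consistency check $n/2 - 1 = f - 3$ confirms the arithmetic. I expect the main obstacle to be exactly this last bookkeeping: one must verify that the slack introduced by $w_x$ and the bent edge is precisely one row and one column, so that the reduction from $n/2$ to $n/2-1$ in each coordinate is genuine and not merely a loose upper bound. Concretely, this amounts to checking that no face's placement secretly relies on the extra width provided by $F_f$, which follows because every face $F_3,\dots,F_{f-1}$ is added strictly above the base and is convex with slopes in $\{0,\pm\sqrt{3}\}$ by the preceding lemma.
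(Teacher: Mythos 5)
Your bookkeeping frame does not match the geometry of Kant's drawing, and that is where the argument breaks. The paper offers no proof of this corollary at all: it is a direct quotation of Kant's area estimate, which bounds exactly the part of the drawing lying on and above the horizontal base path --- that is, the drawing of $H - w_x$ (deleting $w_x$ from the $3$-regular graph $H$ merges its three incident faces $F_f$, $F_1$, $F_2$ into one outer face, which is precisely the configuration Kant's estimate covers). Your derivation instead posits that the \emph{whole} drawing of $H$ fits in an $n/2 \times n/2$ grid and that deleting $w_x$ ``shaves one unit from each dimension.'' Both claims are false. The vertex $w_x$ is the bottom apex of the face $F_f$, whose two sides have the fixed slopes $\pm\sqrt{3}$; hence $w_x$ lies below the base path at a depth proportional to the horizontal distance between $w_y$ and $w_z$, i.e., proportional to the full width of the drawing, not one unit. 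The paper says this explicitly: ``the face $F_f$ is drawn as large as all the remaining faces $F_3,\dots,F_{f-1}$ together.'' Likewise the bent edge $(w_1,w_x)$ wraps around the outer boundary and cannot be charged to a single row or column. So removing $w_x$ does not trim a unit margin off an $n/2 \times n/2$ box; it deletes a region comparable in size to everything that remains, and that is exactly why the corollary is phrased for $H - w_x$ rather than for $H$.

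A second, smaller gap: your claim that each face insertion ``raises the topmost level by at most one unit'' is unjustified. A face completed by a single vertex placed above two attachment points at horizontal distance $d$ rises by roughly $d\sqrt{3}/2$, which can far exceed one unit; Kant's height bound follows from the slope constraints tying height to width, not from a one-unit-per-step count. Your Euler computation $f = n/2 + 2$ is correct but is not needed for the statement. The honest proof here is simply to invoke Kant's estimate as stated, since it already applies to the drawing with $F_1$, $F_2$, $F_f$ coalesced into the outer face (i.e., to $H - w_x$), rather than to reconstruct it by subtracting from a bound on all of $H$ that Kant does not provide.
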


To apply Kant's result to the problem of constructing a touching hexagons representation, we enlarge the embedded input graph 
$G$ so that the dual of the resulting graph $G'$ can be drawn
using Kant's algorithm in such a way that the original 
vertices of $G$ correspond to the faces $F_3,\dots,F_{f-1}$.

We add 3 vertices corresponding to faces $F_1, F_2$ and $F_f$ in Kant's algorithm.
Since $G$ is fully triangulated, let $a, b$ and $c$ be the vertices at the outer face of $G$ in clockwise order.
We add the vertices $x$, $y$ and $z$ in the outer face and connect to $G$
so that $z$ corresponds to the outer face $F_1$, $y$ to $F_2$, and 
$x$ to $F_f$.
First, we add $x$ and connect it to $a$, $b$ and $c$ such that $b$ 
and $c$ are still in the outer face.
Then we add $y$ and connect it to $x$, $b$ and $c$ such that $b$ is 
still in the outer face.
Finally, we add $z$ and connect it to $x$, $y$ and $b$ such that $x$, 
$y$ and $z$ now form the outer face;
see Figure~\ref{fig:kant}(b).

Since the vertices $x$, $y$ and $z$ 
are on the outer face, we can choose which one 
is first, second and last in the canonical order. We then apply
Kant's algorithm with the canonical order $v_1 = z, v_2 = y$ and $v_f = x$.
After construction, we remove the regions corresponding
to vertices $x$, $y$ and $z$, yielding a hexagonal representation
of $G$.

Given any (connected) planar graph $G$, we can make it fully triangulated
using the technique described in Section~\ref{sec:t6gOverview}.
We can then remove the added vertices and edges.
Since Kant's algorithm runs in linear time, and our emendations can be
done in linear time, we get another proof for Theorem~\ref{thm:t6g}.
We again use at most three slopes for each representation with
sides having slopes $\pm\!\sqrt{3}$ or $0$
(or $0$,$+\infty$ and $-1$).

For a triangulated input graph $G=(V,E)$, 
we have $n$ vertices and, by Euler's formula, $2n-4$ faces.
Since we enhanced our graph to
$n+3$ vertices, we have $f = 2n+2$ faces. Those faces are the vertices in the
dual $D(G)$ which is the input to Kant's algorithm.
His area estimation gives an area of $f/2 -1 \times f/2 -1$ for $f$ 
vertices when we coalesce 
the faces $F_1, F_2$ and $F_f$ into a single outer face by 
removing the corresponding vertices and edges. 
Thus, we get an area bound of $n \times n$
using exactly the same argument as in~\cite{kant-92}.

\section{Conclusion and Future Work}

Thomassen~\cite{Thomassen} had shown that not all planar graphs can be represented by touching pentagons, where the external boundary of the figure is also a pentagon and there are no holes. 
Our results are more general, as we do not insist on the external boundary being a pentagon or on there being no holes between pentagons. It is possible to derive algorithms for convex hexagonal representations for general planar graphs from several earlier papers, e.g., de Fraysseix {\em et al.}~\cite{FMR04}, Thomassen~\cite{Thomassen}, and Kant~\cite{kant-92}. 
However, these
do not immediately lead to algorithmic solutions to the problem of 
computing a graph representation with convex low-complexity 
touching polygons. To the best of our knowledge, this problem has never been formally considered.

In this paper, we presented several results about touching $k$-sided graphs. 
We showed that, for general planar graphs, six sides are necessary and sufficient,
and that the algorithm for creating a touching hexagons representation can be modified
to yield an $O(n) \times O(n)$ drawing area.
Finally, we discussed a different algorithm for general planar graphs 
which yields a similar drawing area.

Several interesting related problems are open. What is the complexity of deciding whether a given planar graph can be represented by touching triangles, quadrilaterals, or pentagons? 
In the context of rectilinear cartograms, the vertex-weighted problem has been carefully studied. 
However, the same problem without the rectilinear constraint has received less attention. Finally, it would be interesting to characterize the subclasses of planar graphs that allow for touching triangles, touching quadrilaterals, and touching pentagons representations.

\begin{acknowledgements}
We would like to thank Therese Biedl for pointing out the very relevant work by Kant and Thomassen and the anonymous referees
for their helpful and thoughtful comments.
\end{acknowledgements}

\ifArxiv
\bibliographystyle{abuser}
\else
\bibliographystyle{abbrv}
\fi
{
\begin{small}
\vspace{-.3cm}\bibliography{stephen}
\end{small}
}

\ifArxiv
\newpage
\begin{appendix}
\section{Examples of Touching Hexagons Graph Drawings}
\label{sec:examples}

\begin{figure}[hb]
\begin{center}
\includegraphics[width=\textwidth]{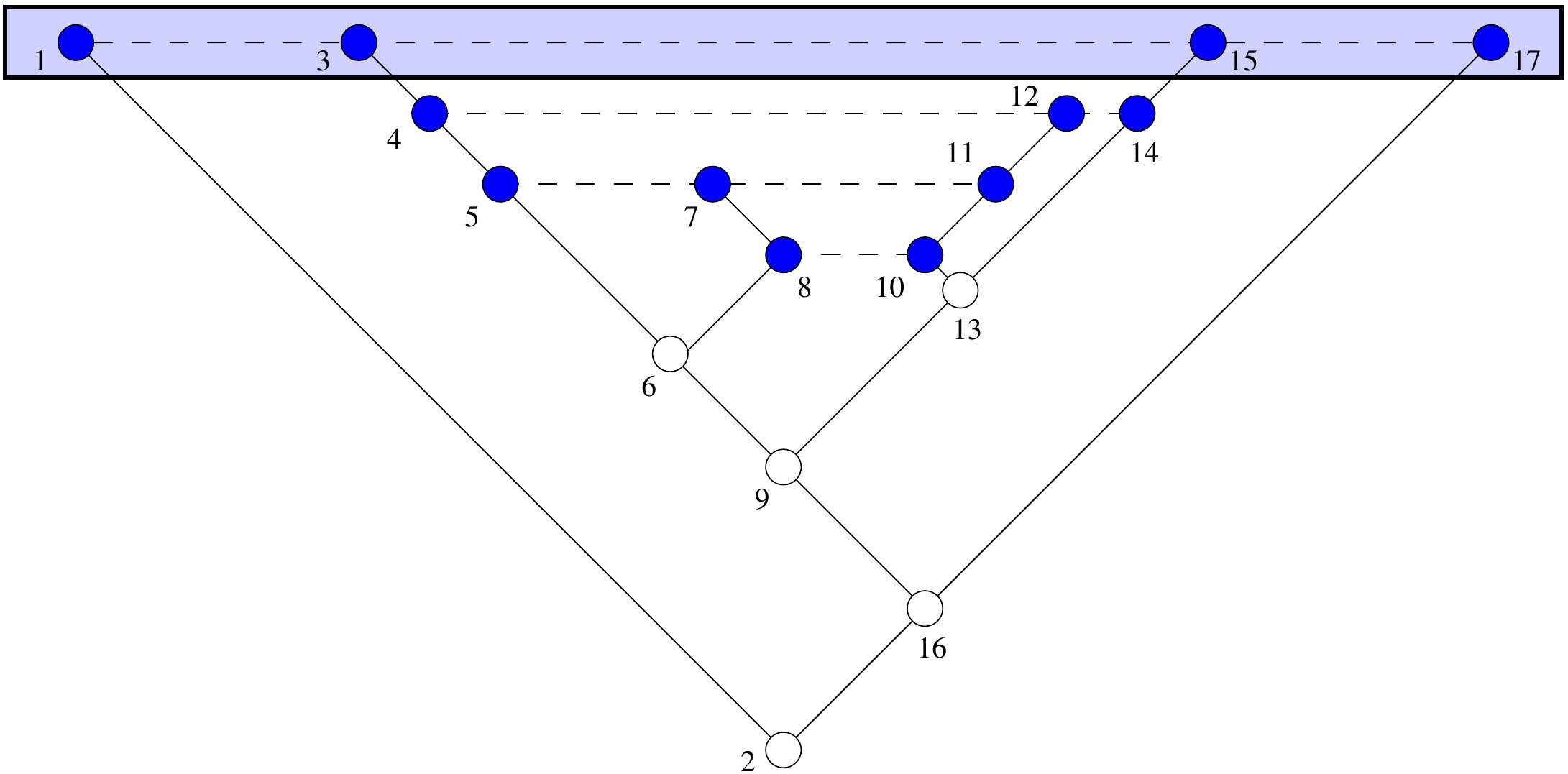}\\
(a)\\
\vspace{0.2cm}
\begin{tabular}{|c|c|c|c|c|c|}
\hline
Node & $x$ & $y$ & shift & h & Steps\\
\hline
1 & 0 & 0 & 0 & 0 & 2a\\
3 & 0 & 0 & 0 & 0 & 2a\\
15 & 0 & 0 & 0 & 0 & 2a\\
17 & 0 & 0 & 0 & 0 & 2a\\
\hline
\end{tabular}\\
\vspace{0.2cm}
(b)
\end{center}
\caption{\small\sf Example execution of the compaction algorithm on 
the graph and subsequent capped binary tree $G_T$
from Figure~\ref{fig:cappedTree}.
(a) The tree with the initial active front set $F$ highlighted.
Observe that 7 and 12 are leaf nodes but are not in $F$ because
$(5,7,11)$ and $(4,12,14)$ are not all leaves in the current working
tree $G_C = G_T$.
(b) The table of the nodes altered in the current iteration
of the algorithm indicating the $x$, $y$, and shift values
for each node along with the specific steps applied to determine
these new values.}
\label{fig-compExample-I}
\end{figure}
\vfill
\begin{figure}
\begin{center}
\begin{tabular}{cc}
\includegraphics[scale=0.5]{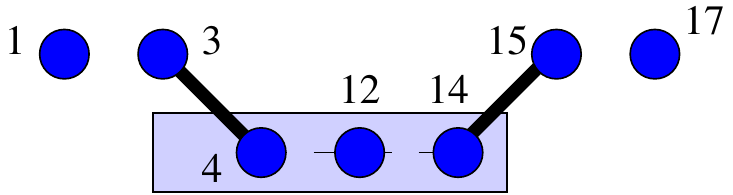} &
\begin{tabular}[b]{|c|c|c|c|c|c|}
\hline
Node & $x$ & $y$ & shift & h & Steps\\
\hline
4 & 1 & -1 & 0 & 1 & 2b,3b\\
12 & 0 & -1 & 0 & 1 & 2a,3a\\
14 & -1 & -1 & 0 & 1 & 2b,3b\\
\hline
\end{tabular}\tabularnewline
\multicolumn{2}{c}{(a)}\\
\\
\includegraphics[scale=.5]{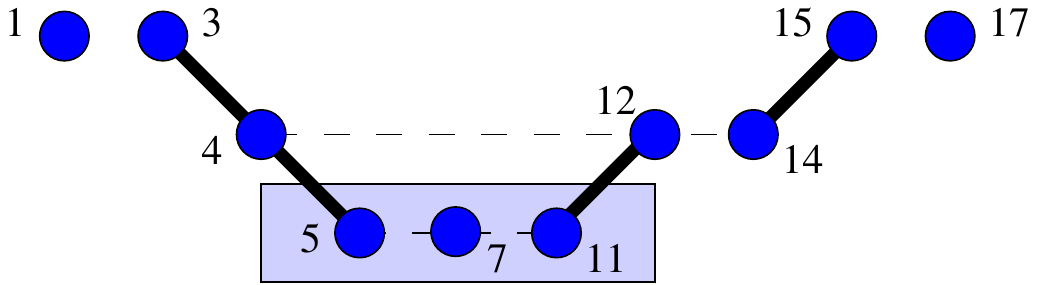} &
\begin{tabular}[b]{|c|c|c|c|c|c|}
\hline
Node & $x$ & $y$ & shift & h & Steps\\
\hline
5 & 2 & -2 & 0 & 2 & 2b,3b\\
7 & 0 & -2 & 0 & 2 & 2a,3a\\
11 & -1 & -2 & 0 & 2 & 2b,3b\\
\hline
\end{tabular}\tabularnewline
\multicolumn{2}{c}{(b)}\\
\\
\includegraphics[scale=.5]{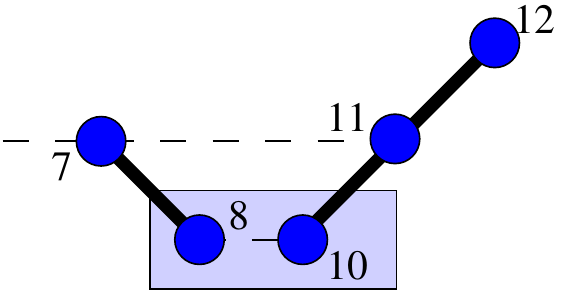} &
\begin{tabular}[b]{|c|c|c|c|c|c|}
\hline
Node & $x$ & $y$ & shift & h & Steps\\
\hline
8 & 1 & -3 & 0 & 3 & 2b,3a\\
10 & -2 & -3 & 0 & 3 & 2b,3a\\
\hline
\end{tabular}\tabularnewline
\multicolumn{2}{c}{(c)}
\end{tabular}
\end{center}
\caption{\small\sf The next three iterations of subsequent active fronts.
For space only a portion of the tree is shown.
(a) The active front consisting of nodes 4, 12, and 14.
Observe that all three nodes are placed (by step 3) at the lowest
$y$-value.  Thus, node 12's $y$ position is set to $-1$.
(b) The active front consisting of nodes 5, 7, and 11.
(c) The active front consisting of nodes 8 and 10.}
\label{fig-compExample-II}
\end{figure}

\begin{figure}
\begin{center}
\begin{tabular}{cc}
\includegraphics[scale=0.5]{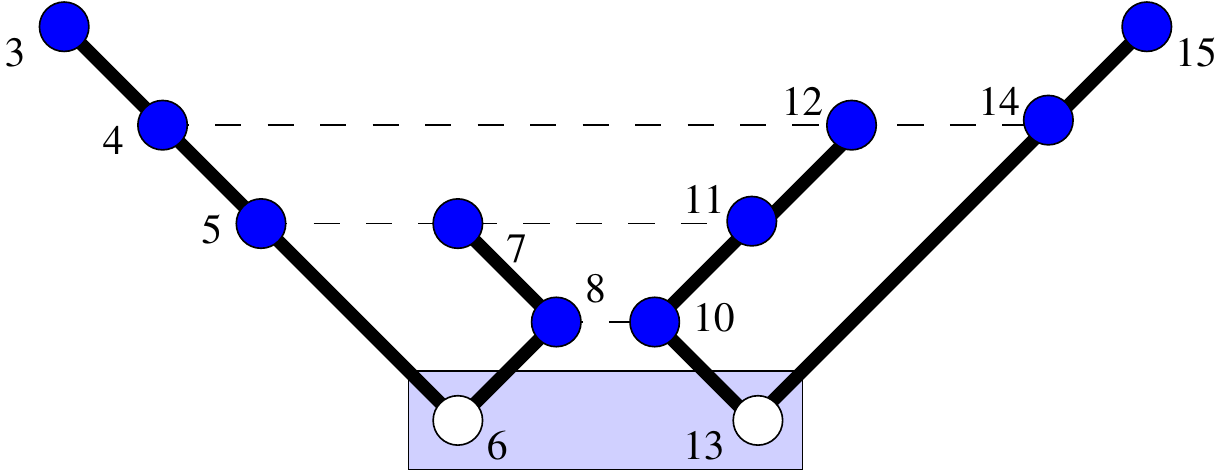} &
\begin{tabular}[b]{|c|c|c|c|c|c|}
\hline
Node & $x$ & $y$ & shift & h & Steps\\
\hline
 6 &  4 & -4 & 0 & - & 2c\\
13 & -1 & -4 & 0 & - & 2c\\
\hline\hline
 8 &  1 & -3 & 4 & - & 2c\\
14 & -1 & -1 & 3 & - & 2c\\
\hline
\end{tabular}\tabularnewline
\multicolumn{2}{c}{(a)}\\
\\
\includegraphics[scale=.5]{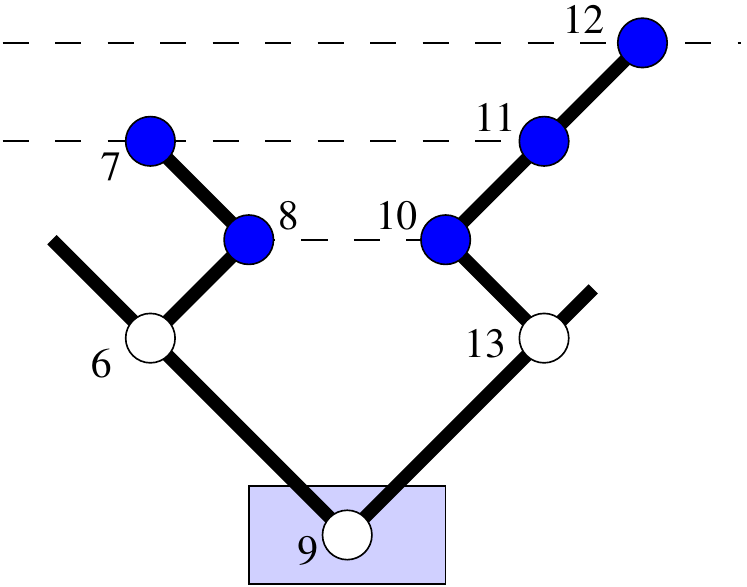} &
\begin{tabular}[b]{|c|c|c|c|c|c|}
\hline
Node & $x$ & $y$ & shift & h & Steps\\
\hline
 9 &  6 & -6 & 0 & - & 2c\\
\hline\hline
13 & -1 & -4 & 9 & - & 2c\\
\hline
\end{tabular}\tabularnewline
\multicolumn{2}{c}{(b)}\\
\\
\includegraphics[scale=.5]{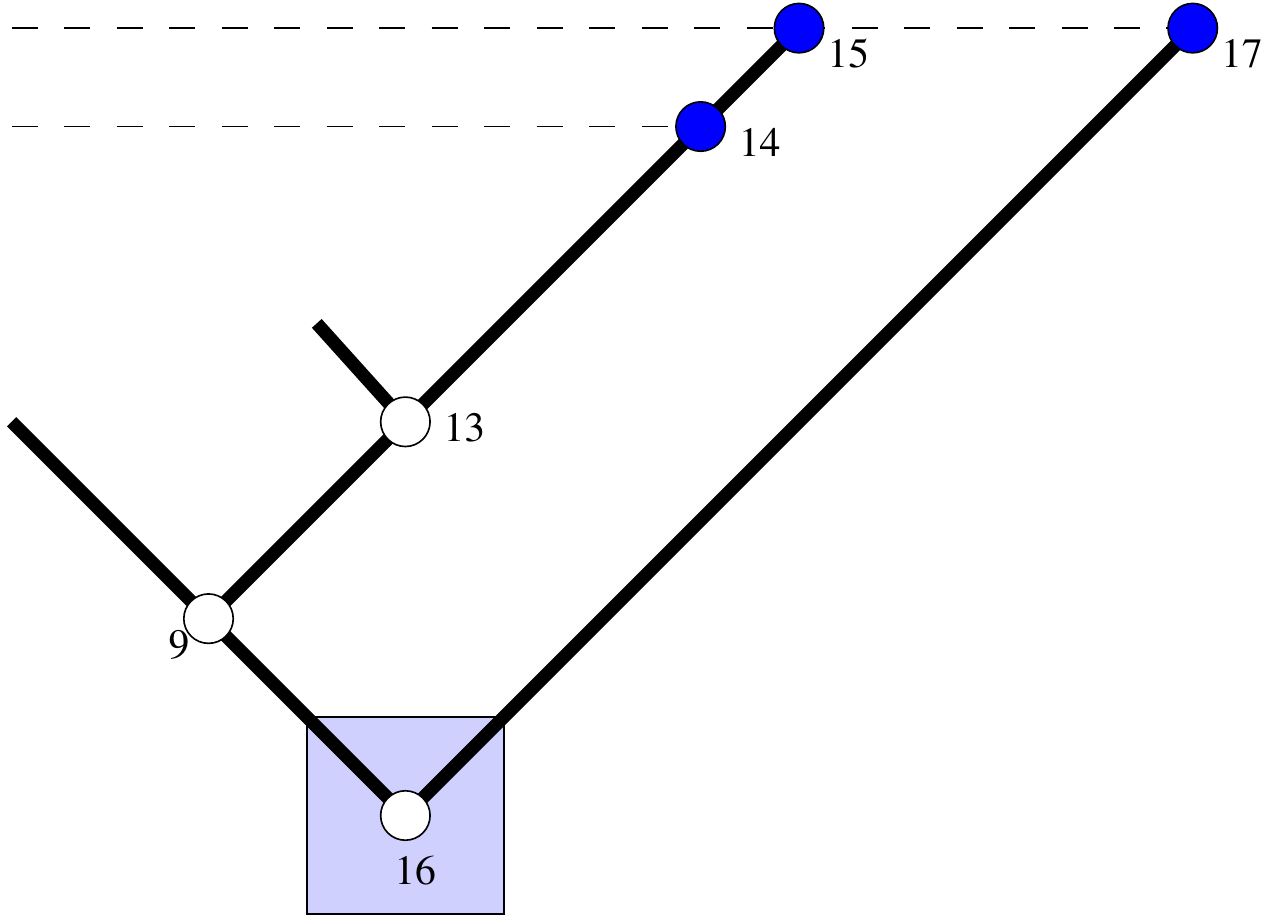} &
\begin{tabular}[b]{|c|c|c|c|c|c|}
\hline
Node & $x$ & $y$ & shift & h & Steps\\
\hline
16 & 7 & -7 &  0 & - & 2c\\
\hline\hline
17 & 0 &  0 & 14 & - & 2c\\
\hline
\end{tabular}\tabularnewline
\multicolumn{2}{c}{(c)}
\end{tabular}
\end{center}
\caption{\small\sf The next three iterations of subsequent active fronts.
In these examples, there are no cap nodes involved so Step 3
is not executed.
(a) The active front consisting of nodes 6 and 13.
Nodes 8 and 14 are affected because they are shifted as
part of the compaction step (2c).
(b) The active front consisting of node 9.
Node 13 is shifted in Step 2c because it is the root of the 
right subtree of node 9.
(c) The active front consisting of node 16.
Node 17 is shifted in Step 2c because it is the root of the
right subtree of 16.
}
\label{fig-compExample-III}
\end{figure}

\vfill

\begin{figure}
\begin{center}
\includegraphics[width=\textwidth]{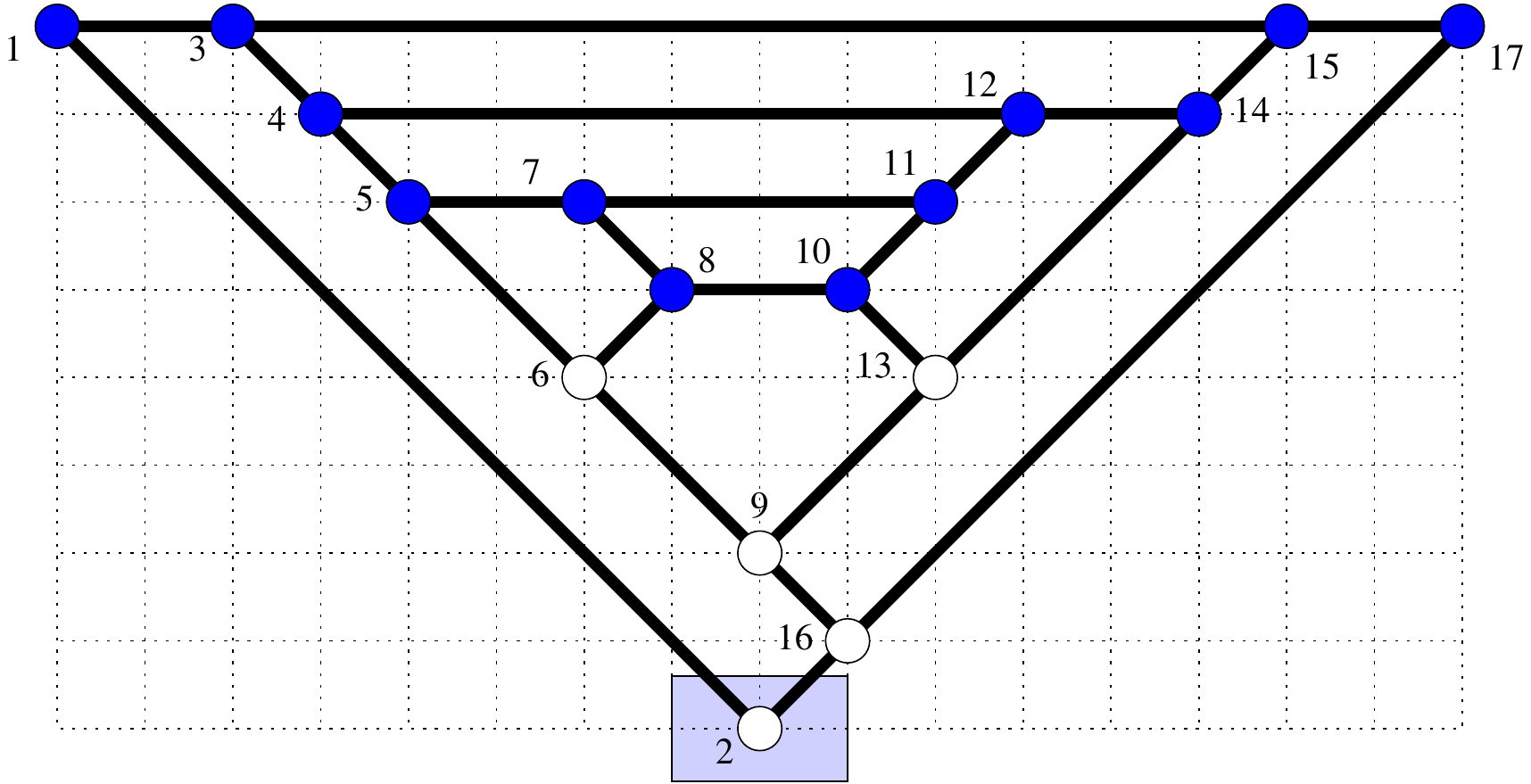}\\
(a)\\
\vspace{0.2cm}
\begin{tabular}{|c|c|c|c|c|c|}
\hline
Node & $x$ & $y$ & shift & h & Steps\\
\hline
 2 & 8 & -8 & 0 & - & 2c\\
\hline\hline
16 & 7 & -7 & 2 & - & 2c\\
\hline
\end{tabular}\\
\vspace{0.2cm}
(b)\\
\vspace{0.2cm}
\begin{tabular}{|c|c|c|c|}
\hline
Node & $x$ & $y$ & shift\\
\hline
 2 &  8 & -8 & 0\\
 1 &  0 &  0 & 0\\
16 &  9 & -7 & 2\\
 9 &  8 & -6 & 2\\
 6 &  6 & -4 & 2\\
 5 &  4 & -2 & 2\\
 4 &  3 & -1 & 2\\
 3 &  2 &  0 & 2\\
 8 &  7 & -3 & 6\\
 7 &  6 & -2 & 6\\
13 & 10 & -4 & 11\\
10 &  9 & -3 & 11\\
11 & 10 & -2 & 11\\
12 & 11 & -1 & 11\\
14 & 13 & -1 & 14\\
15 & 14 &  0 & 14\\
17 & 16 &  0 & 16\\
\hline
\end{tabular}\\
\vspace{0.2cm}
(c)
\end{center}
\caption{\small\sf The final iteration along with the propagation
of the shifts to the children.
(a) The final drawing.
(b) The table of the two affected nodes by the last step.
(c) The final table of all node positions including the
shift values applied after propagation, listed in
preorder traversal of the tree.
}
\label{fig-compExample-IV}
\end{figure}

\begin{figure}[hb]
\begin{center}
\begin{tabular}{cc}
\includegraphics[scale=.3]{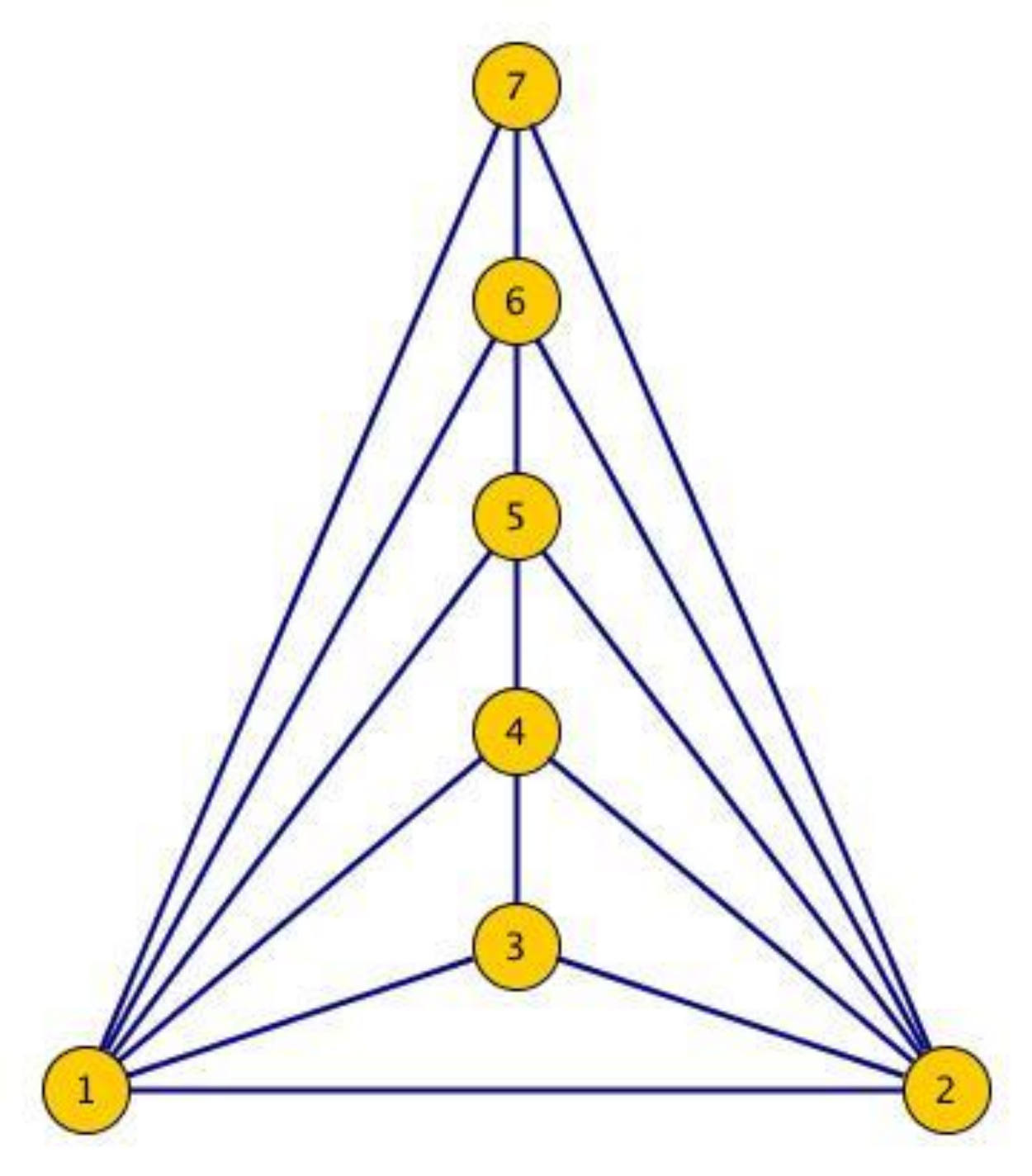} &
\includegraphics[width=.4\textwidth]{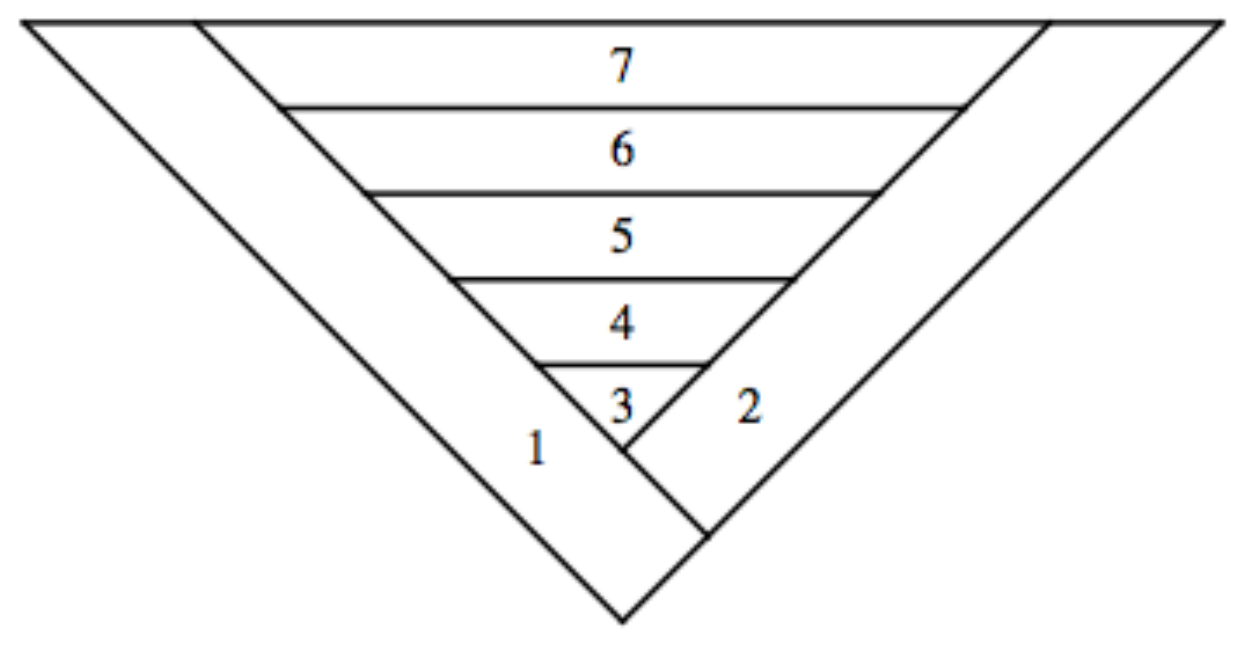}\\
\includegraphics[scale=.3]{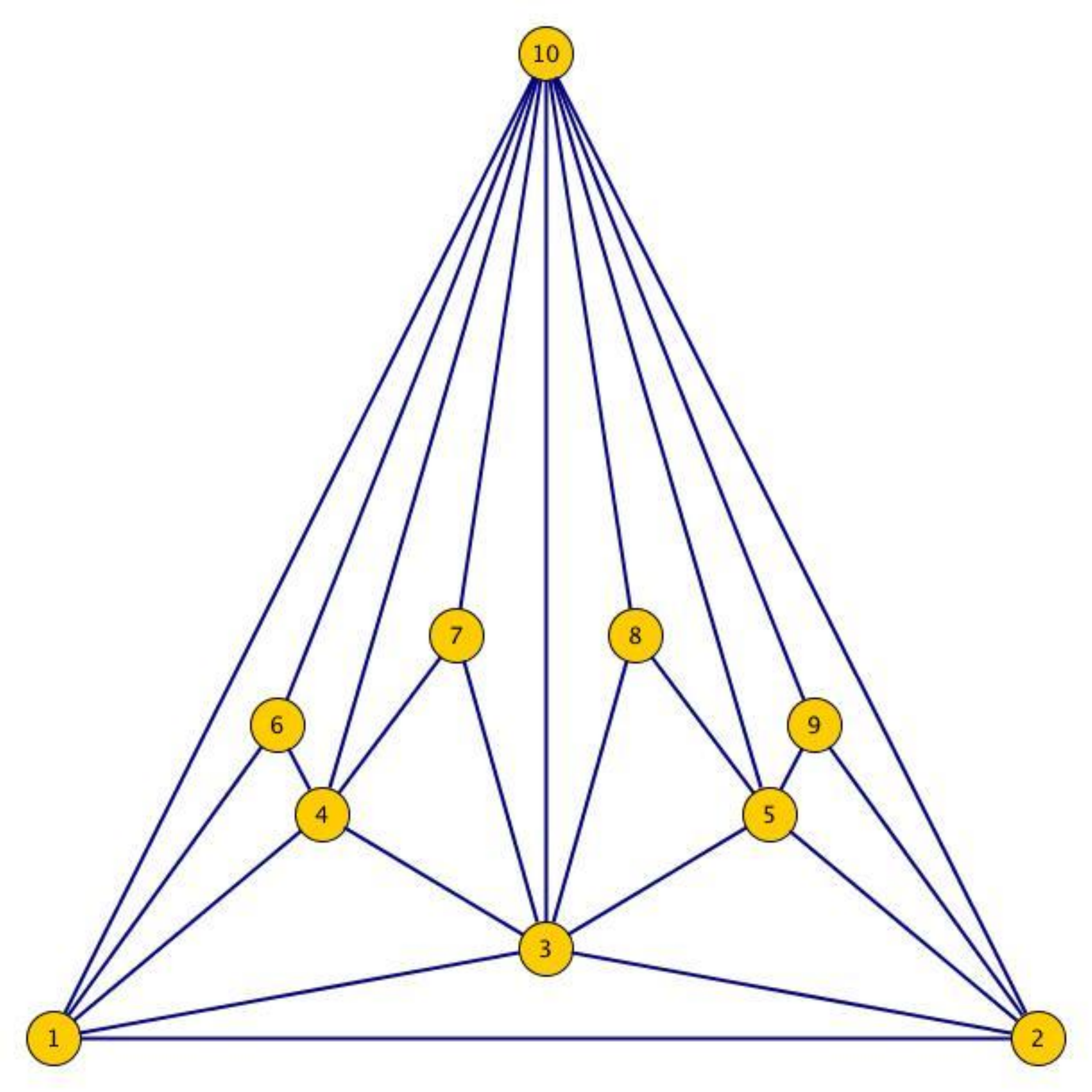} &
\includegraphics[width=.4\textwidth]{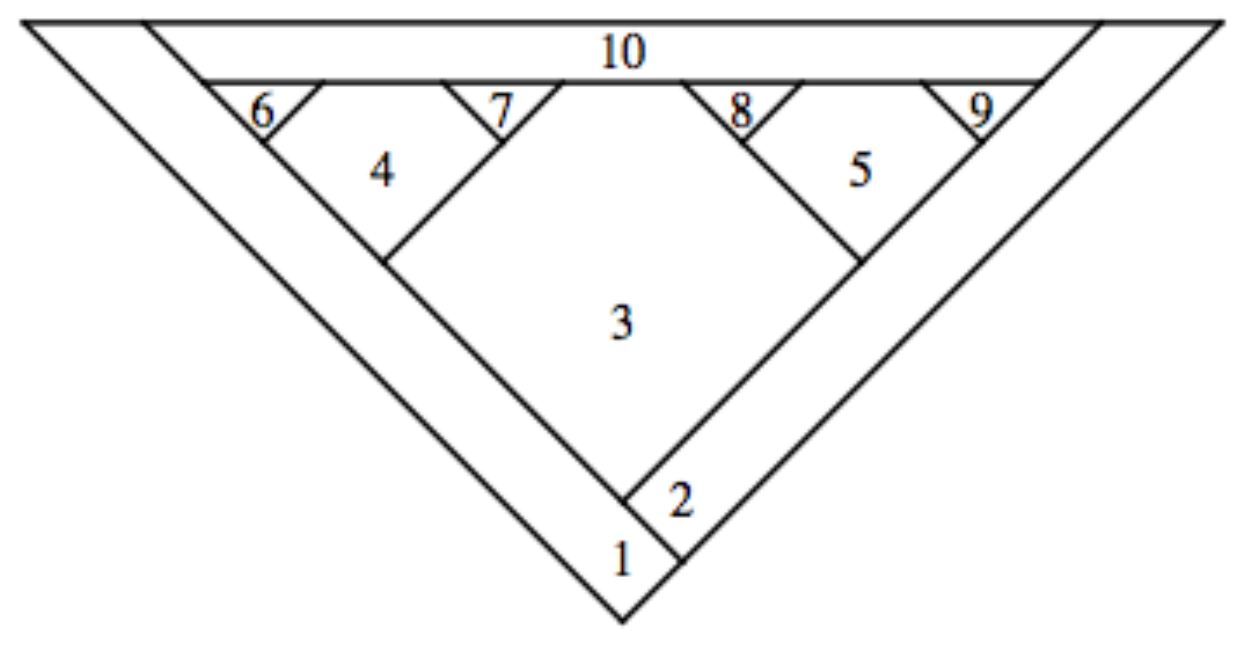}\\
\includegraphics[scale=.3]{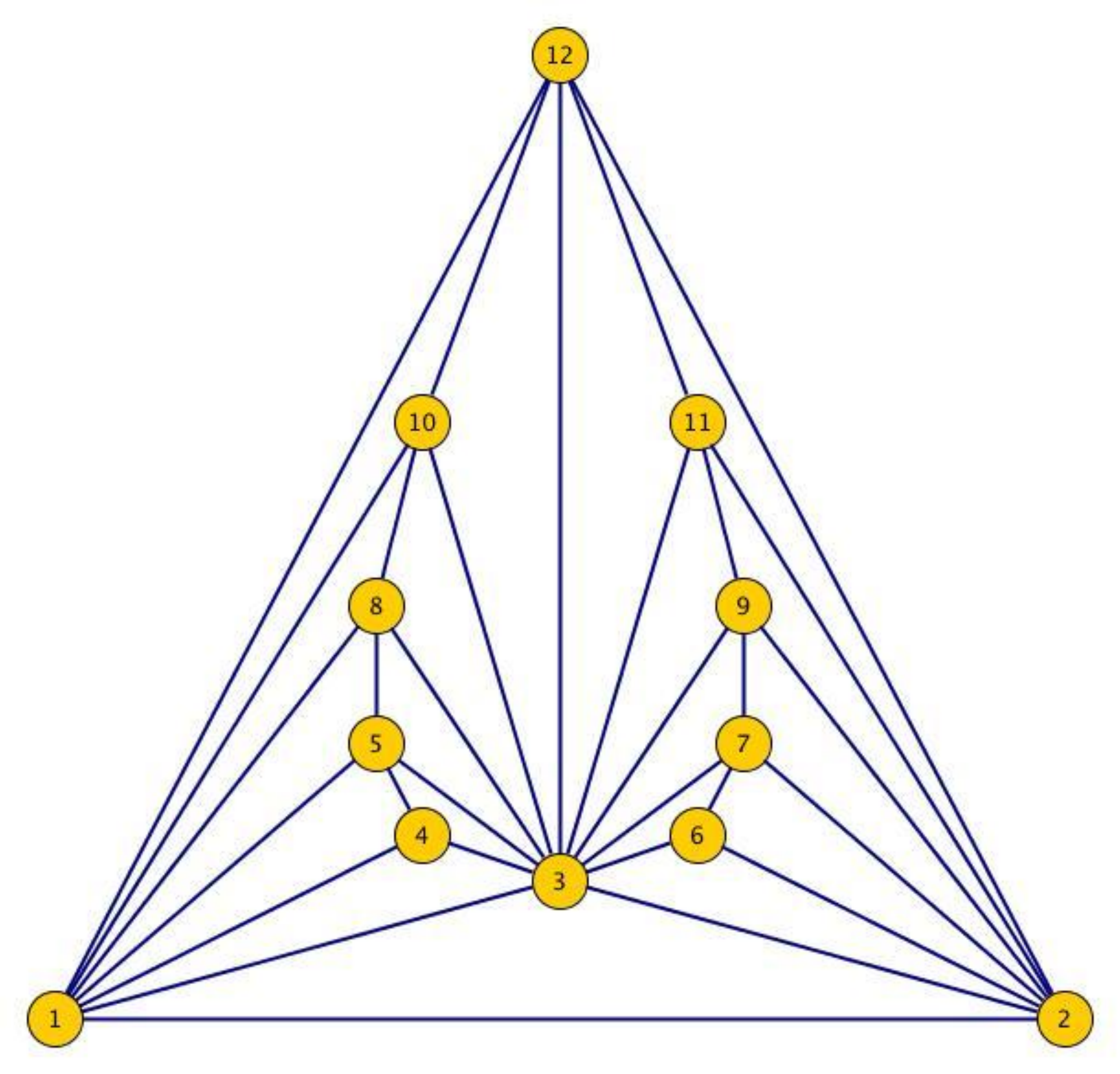} &
\includegraphics[width=.4\textwidth]{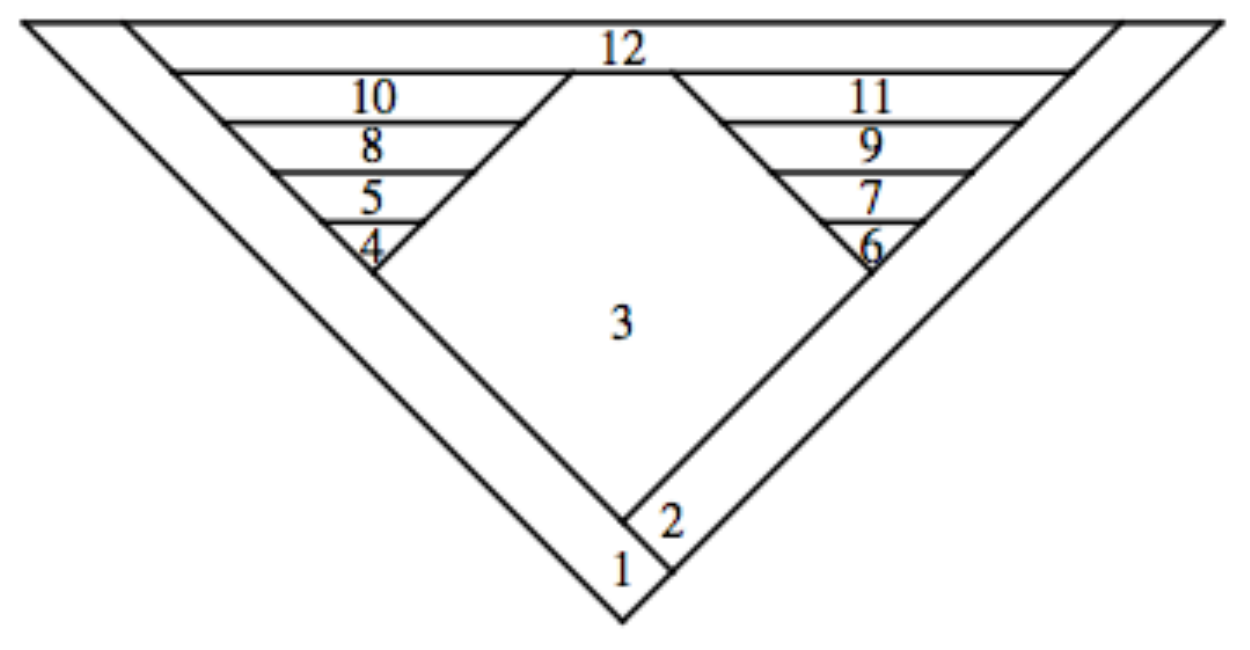}
\end{tabular}
\end{center}
\vspace{-.2cm}\caption{\small\sf Examples illustrating input graphs and their corresponding touching
hexagons representations.}
\label{fig-samples}
\end{figure}

\end{appendix}
\fi
\end{document}